\newtheorem{proposition}{Proposition}
\newtheorem{theorem}[proposition]{Theorem}
\newtheorem{lemma}[proposition]{Lemma}
\theoremstyle{remark}
\newtheorem{remark}[proposition]{Remark}
\theoremstyle{definition}
\newtheorem{definition}[proposition]{Definition}
\numberwithin{equation}{section}
\numberwithin{proposition}{section}
\numberwithin{figure}{section}
\numberwithin{table}{section}
\newcommand{\N}{\mathbb{N}}
\newcommand{\R}{\mathbb{R}}
\newcommand{\E}{\mathbb{E}}
\renewcommand{\S}{\mathbf{S}}
\newcommand{\eps}{\varepsilon}
\renewcommand{\leq}{\leqslant}
\renewcommand{\geq}{\geqslant}
\newcommand{\Ll}{\left}
\newcommand{\Rr}{\right}
\renewcommand{\d}{\mathrm{d}}
\newcommand{\D}{q}
\newcommand{\la}{\left\langle}
\newcommand{\ra}{\right\rangle}
\newcommand{\diag}{\mathrm{diag}}
\newcommand{\one}{{\boldsymbol{1}}}
\newcommand{\Sym}{{\mathrm{Sym}(\D)}}
\newcommand{\sP}{\mathscr{P}}
\newcommand{\cc}{{\mathrm{c}}}
\newcommand{\cw}{{\mathrm{CW}}}
\newcommand{\bfs}{{\mathbf{s}}}
\definecolor{darkgreen}{rgb}{0,0.5,0}
\definecolor{darkblue}{rgb}{0,0,0.7}
\newcommand{\rv}[1]{\textcolor{black}{#1}}
\begin{document}

\author[Hong-Bin Chen]{Hong-Bin Chen}
\address[Hong-Bin Chen]{Institut des Hautes Études Scientifiques, Bures-sur-Yvette, France}
\email{\href{mailto: hbchen@ihes.fr}{ hbchen@ihes.fr}}

\keywords{Potts spin glass, color symmetry, ferromagnetism, phase transition}
\subjclass[2020]{82B44, 82D30}

\title[Color symmetry and ferromagnetism in Potts spin glass]{Color symmetry and ferromagnetism\\ in Potts spin glass}

\begin{abstract}
We consider the Potts spin glass with additional ferromagnetic interaction parametrized by $t$. It has long been observed that the Potts color symmetry breaking for the spin glass order parameter is closely related to the ferromagnetic phase transition. To clarify this, we identify a single critical value $t_\mathrm{c}$, which marks the onset of both color symmetry breaking and the transition to ferromagnetism. 
\end{abstract}

\maketitle

\section{Introduction}

\subsection{Setting}

Throughout, for matrices or vectors $a$ and $b$ of the same dimension, we denote by $a\cdot b$ the entry-wise inner product and write $|a|=\sqrt{a\cdot a}$. 
We write $\R_+=[0,\infty)$ and $\R_{++} = (0,\infty)$.

Fix an integer $\D\geq 2$ and let 
$\mu$ be the uniform probability measure on the standard basis
$\{e_1,\dots,e_\D\}$ of $\R^\D$ viewed as the state space of the $\D$-Potts spins. 
We interpret $\{1,\dots,\D\}$ as the labels of $\D$ instinct colors.
For each $N\in\N$, we sample column vectors $\sigma_{\bullet i}=(\sigma_{di})_{1\leq d\leq \D}$ independently from $\mu$, for each $i\in\{1,\dots,N\}$. We denote by $\sigma = (\sigma_{di})_{1\leq d\leq \D,\, 1\leq i\leq N}$ the spin configuration, which is viewed as a $\D\times N$ matrix.

The $\R^\D$-valued \textit{mean magnetization} is defined as
\begin{align}\label{e.m_N=}
    m_N = N^{-1} \sum_{i=1}^N \sigma_{\bullet i}.
\end{align}
Notice that for Potts spins, we have the following relations
\begin{align}\label{e.self-overlap=m_N}
    N^{-1}\sigma\sigma^\intercal = \diag\Ll(m_N\Rr),\qquad
    m_N= N^{-1}\Big(\#\Ll\{i\in\{1,\dots,N\}:\: \sigma_{\bullet i}=e_d\Rr\}\Big)_{1\leq d\leq \D}.
\end{align}
Hence, the $\R^{\D\times\D}$-valued \textit{self-overlap} $N^{-1}\sigma\sigma^\intercal$ contains the same information of the mean magnetization $m_N$.

We consider the spin glass interaction
\begin{align*}
    H_N(\sigma) = \frac{1}{\sqrt{N}} \sum_{i,j=1}^N g_{ij}\sigma_{\bullet i}\cdot \sigma_{\bullet j}.
\end{align*}
where $(g_{ij})_{1\leq i,j\leq N}$ is a collection of i.i.d.\ standard Gaussian random variables.
Throughout, we fix $\beta\geq 0$ to be the inverse temperature.
For $N\in\N$, $t\in \R_+$, and $x\in \R^\D$, we consider the augmented Hamiltonian
\begin{align}\label{e.H_beta,N=}
    H_{\beta,N}(t,x,\sigma) = \beta H_N(\sigma) + \Ll(t- \frac{\beta^2}{2}\Rr)N|m_N|^2 + N x\cdot m_N
\end{align}
where on the right-hand side the second term accounts for the additional ferromagnetic interaction and the third term is an external magnetic field.
At $t=0$, we can use~\eqref{e.self-overlap=m_N} to write
\begin{align}\label{e.H(0,0,sigma)=}
    H_{\beta,N}(0,x,\sigma) = \beta H_N(\sigma) - \frac{\beta^2}{2}N\Ll|\frac{\sigma\sigma^\intercal}{N}\Rr|^2 + N x\cdot m_N
\end{align}
which is the Hamiltonian in the model with \textit{self-overlap correction} considered in~\cite{chen2023onparisi}. The self-overlap correction refers to $- \frac{\beta^2}{2}N\Ll|\frac{\sigma\sigma^\intercal}{N}\Rr|^2$ which is $-\frac{1}{2}$ times the variance of $H_N(\sigma)$ (viewed as a Gaussian random variable). The main result in~\cite{chen2023onparisi} states that the Parisi formula for the model~\eqref{e.H(0,0,sigma)=} at $x=0$ is an infimum taken over paths respecting the color symmetry of the Potts spin (i.e.\ invariant under permutation of color labels $\{1,\dots ,\D\}$).
In the setting of vector spin glass which includes the Potts one as a special case, the model with self-overlap correction~\cite{chen2023self,chen2023on} is easier to deal with than the standard ones, partly due to the reason that the Parisi formula is an infimum rather than a $\sup\inf$ formula~\cite{pan.potts,pan.vec} for the standard model without the self-overlap correction. This is the reason for us to view~\eqref{e.H(0,0,sigma)=} as the base model and add ferromagnetic interactions on top of it as in~\eqref{e.H_beta,N=}.

We also mention that it is natural to view the term $\Ll(t- \frac{\beta^2}{2}\Rr)N|m_N|^2$ through~\eqref{e.m_N=} as shifting the mean of $(g_{ij})$, which is a more common perspective used in, e.g., \cite{elderfield1983curious,elderfield1983novel,elderfield1983spin,lage1983mixed,gross1985mean,de1995static,caltagirone2012dynamical}.

For $N\in\N$ and $(t,x) \in \R_+\times \R^\D$, the free energy is defined as 
\begin{align}\label{e.F_N=}
    F_{\beta,N}(t,x) = \frac{1}{N}\E \log\int \exp\Ll(H_{\beta,N}(t,x,\sigma)\Rr)\otimes_{i=1}^N \d \mu(\sigma_{\bullet i})
\end{align}
where $\E$ averages over $(g_{ij})$.
The associated Gibbs measure is denoted as
\begin{align}\label{e.gibbs}
    \la\cdot\ra_{t,x} \propto \exp\Ll(H_{\beta,N}(t,x,\sigma)\Rr)\otimes_{i=1}^N \d \mu(\sigma_{\bullet i})
\end{align}
where we choose to omit $\beta$ and $N$ from the notation since $\beta$ is fixed and $N$ will be clear from the context.

It has long been observed in physics~\cite{elderfield1983curious,elderfield1983novel,elderfield1983spin,lage1983mixed,gross1985mean,de1995static,caltagirone2012dynamical} that sometimes (i.e.\ for certain $\D$ and $\beta$) additional antiferromagnetic interaction (i.e.\ $t<\frac{\beta^2}{2}$) is needed to ensure that the spin glass order parameter respects the Potts color symmetry. This suggests a close link between ferromagnetism and color symmetry breaking. But to the best knowledge of the author, there has not been rigorous studies of it yet.

To present the main result that clarifies this link, we need to describe the Parisi formula for the limit of free energy.

\subsection{Parisi formulae}

We view $F_{\beta,N}$ in~\eqref{e.F_N=} as a function on $\R_+\times \R^\D$. We denote the limit of initial condition as
\begin{align}\label{e.sP(x)=}
    \sP_\beta(x ) = \lim_{N\to\infty} F_{\beta,N}(0,x),\quad\forall x \in \R^\D.
\end{align}
Recall the discussion below~\eqref{e.H(0,0,sigma)=} that the case $t=0$ corresponds to the model with self-overlap correction. The existence of the limit in~\eqref{e.sP(x)=} is ensured by~\cite[Theorem~1.1]{chen2023self}.
One should think of $\sP_\beta(x)$ as the Parisi formula associated with the model $H_{\beta,N}(0,x,\sigma)$.

In contrast to the sup-inf formula derived by Panchenko in~\cite{pan.potts}, the expression for $\sP_\beta(x)$ presented here involves only an infimum, owing to the inclusion of the self-overlap correction term.
Indeed, \cite[Theorem~1.1]{chen2023self} gives that, for each $x \in \R^\D$, 
\begin{align}\label{e.Parisi_formula}
    \sP_\beta(x) = \inf_{\pi \in \Pi} \sP_\beta(\pi, x).
\end{align}
We now describe the right-hand side of this expression. Let $\S^\D_+$ denote the set of $\D \times \D$ real symmetric positive semi-definite matrices. The collection $\Pi$ consists of left-continuous paths $\pi:[0,1] \to \S^\D_+$ that are continuous at $1$ and increasing in the sense that $\pi(s') - \pi(s) \in \S^\D_+$ whenever $s' \geq s$. The Parisi functional $\sP_\beta(\pi, x)$ is defined explicitly in~\cite[(1.5)]{chen2023self} (see also~\cite[(1.7)]{chen2023on}); we omit the full expression, as it is not needed here.

For each $z\in \R^\D$, we consider the subcollection $\Pi(z) = \Ll\{\pi\in\Pi:\: \pi(1)=\diag(z)\Rr\}$.
By~\cite[Theorem~1.1 (3)]{chen2023on} (see \eqref{e.diag(nabla_sP)} in Remark~\ref{r.compare_note} to match the notation), we can refine~\eqref{e.Parisi_formula} into
\begin{align*}
    \sP_\beta(x) = \inf_{\pi\in\Pi\Ll(\nabla\sP_\beta(x)\Rr)} \sP_\beta(\pi, x).
\end{align*}
In other words, the Parisi formula in~\eqref{e.Parisi_formula} optimizes over paths with a fixed endpoint specified by $\nabla \sP_\beta(x)$.
Using this, we can rewrite the Hopf--Lax formula to be given in~\eqref{e.hopf-lax} as
\begin{align}\label{e.f(t,0)=hopf_lax_path}
    \lim_{N\to\infty} F_{\beta,N}(t,0) = \sup_{y\in\R^\D} \inf_{\pi\in\Pi\Ll(\nabla\sP_\beta(y)\Rr)}\Ll\{\sP_\beta(\pi, y)-\frac{|y|^2}{4t}\Rr\}, \quad\forall t\in\R_+.
\end{align}
Recall that $F_{\beta,N}(t,0)$ denotes the free energy in the absence of an external field, i.e., when $x = 0$. Using the Parisi formula~\eqref{e.f(t,0)=hopf_lax_path}, we can now define the notion of color symmetry.

\subsection{Color symmetry}

Let $\Sym$ be the group consisting of permutations on $\{1,\dots,\D\}$ (labels of colors). For every $\bfs \in \Sym$ and $x =(x_1,\dots,x_\D) \in\R^\D$, we write
\begin{align}\label{e.x^s=}
    x^\bfs  = \Ll( x_{\bfs(1)},\dots ,x_{\bfs(\D)}\Rr).
\end{align}
In this notation, any $\bfs\in\Sym$ acts on the spin configuration $\sigma$ through
\begin{align}\label{e.permute_spin}
    \sigma^\bfs = (\sigma_{\bullet i}^\bfs)_{1\leq i\leq N}
\end{align}
where we apply the same permutation $\bfs$ to every single spin (column vector).

Let $\mathcal Z$ denote the set of increasing, left-continuous paths $\zeta:[0,1] \to [0,1]$ satisfying $\zeta(1) := \lim_{s \uparrow 1} \zeta(s) = 1$. 
Let $\mathbf{Id}$ be the $\D\times\D$ identity matrix and let $\one$ be the vector in $\R^\D$ with all entries equal to $1$. 
We define
\begin{align}\label{e.Psi=}
    \Psi(s) = \frac{s}{\D} \mathbf{Id} + \frac{1 - s}{\D^2} \one \one^\intercal ,\quad \forall s \in [0,1].
\end{align}
These objects will be used in the discussion below.

We now give a heuristic interpretation of the parameters appearing in~\eqref{e.f(t,0)=hopf_lax_path}. Let $y$ and $\pi$ denote optimal values. Intuitively, the asymptotic distribution of the overlap $N^{-1} \sigma \sigma'^\intercal$ under $\E\langle \cdot \rangle_{t,0}$ (where $\sigma$ and $\sigma'$ are independent samples from $\langle \cdot \rangle_{t,0}$) should correspond to the law of $\pi(U)$, where $U$ is uniformly distributed on $[0,1]$.

Similarly, the limiting self-overlap $N^{-1} \sigma \sigma^\intercal$ should correspond to the endpoint $\pi(1)$, which equals $\diag\big(\nabla \sP_\beta(y)\big)$. While this correspondence is not exact, we adopt this heuristic viewpoint to guide intuition.

Color symmetry suggests that the optimal path $\pi$ should be invariant under permutations of spin labels. More precisely,
\begin{align*}
    \big( \pi_{dd'}(s) \big)_{1 \leq d, d' \leq \D} = \big( \pi_{\bfs(d)\bfs(d')}(s) \big)_{1 \leq d, d' \leq \D}, \quad \forall s \in [0,1), \quad \forall \bfs \in \Sym.
\end{align*}
Moreover, since the entries of $N^{-1} \sigma \sigma'^\intercal$ sum to $1$, we expect that $\sum_{d,d'=1}^\D \pi_{dd'} = 1$. Under these constraints, any such $\pi$ must take the form $\pi = \Psi \circ \zeta$ for some $\zeta \in \mathcal Z$.

In particular, the endpoint satisfies $\pi(1) = \D^{-1} \mathbf{Id}$, implying that $\nabla \sP_\beta(y) = \D^{-1} \one$. Although the argument is non-rigorous, it serves as motivation for the definition that follows.

\begin{definition}\label{d.color_symmetry}
Let $t \in \R_+$ and consider the case $x = 0$ as in~\eqref{e.F_N=}. We say that \emph{color symmetry is preserved} if there exists a maximizer $y \in \R^\D$ of the supremum in~\eqref{e.f(t,0)=hopf_lax_path} such that $\nabla \sP_\beta(y) = \D^{-1} \one$, and the infimum in~\eqref{e.f(t,0)=hopf_lax_path} is attained over symmetric paths, specifically:
\begin{align}\label{e.colorsymparisi}
    \lim_{N \to \infty} F_{\beta,N}(t,0) = \inf_{\zeta \in \mathcal Z} \sP_\beta(\Psi \circ \zeta, y) - \frac{|y|^2}{4t},
\end{align}
where $\Psi$ is defined in~\eqref{e.Psi=} and $\mathcal Z$ is introduced above~\eqref{e.Psi=}. 

If no such $y$ and path structure exist, we say that \emph{color symmetry is broken}.
\end{definition}

\subsection{Main result}
Recall the mean magnetization $m_N$ from~\eqref{e.m_N=} and the Gibbs measure from~\eqref{e.gibbs}.

\begin{theorem}
\label{thm.main}
Fix $\D \in \N$ and $\beta > 0$. Consider the zero external field case, i.e., $x = 0$. There exists a critical value $t_\cc > 0$, defined as the first point where the map $t \mapsto \lim_{N \to \infty} F_{\beta,N}(t,0)$ becomes non-linear (see~\eqref{e.t_c=}), such that the following phase transition occurs:

For $t < t_\cc$:
    \begin{itemize}
        \item Color symmetry is preserved (which also holds at $t=t_\cc$ by continuity).
        \item The mean magnetization (after proper re-centering) is zero: for any sequence $(x_N)_{N\in\N}$ converging to $0$, we have 
    \begin{align*}
        \lim_{N\to\infty} \E \la \Ll|m_N-\D^{-1}\one\Rr|\ra_{t,x_N} =0.
    \end{align*}
    \end{itemize}

For $t > t_\cc$:
    \begin{itemize}
        \item Color symmetry is broken.
        \item Spontaneous magnetization appears: there is a strictly increasing sequence $(N_n)_{n\in\N}$ of positive integers and $(x_n)_{n\in\N}$ converging to $0$ such that $\lim_{n\to\infty} \E \la m_{N_n}\ra_{t,x_n}$ exists but is not $\D^{-1}\one$.
    \end{itemize}

The critical value $t_\cc$ also corresponds to the point where the variational representation in~\eqref{e.f(t,0)=hopf_lax_path} of $\lim_{N \to \infty} F_{\beta,N}(t,0)$ transitions from having a unique maximizer to multiple maximizers.
\end{theorem}

It is natural to expect that ferromagnetism implies color symmetry breaking, as the ferromagnetic phase favors one color over the others. However, it is noteworthy that these two phase transitions occur simultaneously, despite being associated with distinct types of order parameters.
We make two comments on the model at $x=0$:
\begin{itemize}
    \item The law of the overlap $N^{-1}\sigma\sigma'^\intercal$ is color-symmetric (invariant under the action in~\eqref{e.permute_spin}) for all $t$, and therefore so is its limit (if exists). However, for $t > t_\cc$, due to the breaking of color symmetry, an optimal path $\pi$ in the Parisi formula~\eqref{e.f(t,0)=hopf_lax_path} is no longer color-symmetric. Consequently, in this supercritical regime, $N^{-1}\sigma\sigma'^\intercal$ does not converge to $\pi$. An analogous statement applies to the self-overlap when $t > t_\cc$.
    
    \item Since $t_\cc > 0$, color symmetry is always preserved at $t = 0$ for any $\beta$. In this case, the Hamiltonian is given by~\eqref{e.H(0,0,sigma)=}, where a self-overlap correction is included. This correction can be interpreted as subtracting from $\beta H_N(\sigma)$ the appropriate amount of ferromagnetic interaction. Consequently, at $t = 0$, the Parisi formula~\eqref{e.sP(x)=}--\eqref{e.Parisi_formula} takes the form of an infimum, as in the Sherrington--Kirkpatrick model, whereas for all $t > 0$, it assumes a sup-inf structure, as in~\eqref{e.f(t,0)=hopf_lax_path}.
    
    \rv{For $t \in (0, t_\cc]$, the color symmetry (as defined in Definition~\ref{d.color_symmetry}) allows the variational formula~\eqref{e.f(t,0)=hopf_lax_path} to reduce to an infimum representation given in~\eqref{e.colorsymparisi}. In contrast, for $t > t_\cc$, multiple maximizers of~\eqref{e.f(t,0)=hopf_lax_path} exist (see Proposition~\ref{p.t_c}~\eqref{i.p.t_c_3}), and their structure is too implicit to permit a similar simplification.
}
\end{itemize}

\subsection{Related works}
In addition to the aforementioned sources of motivation~\cite{elderfield1983curious,elderfield1983novel,elderfield1983spin,lage1983mixed,gross1985mean,de1995static,caltagirone2012dynamical}, we highlight another motivation stemming from~\cite{mourrat2024color}.
When $t=\frac{\beta^2}{2}$ and $x=0$, the model becomes standard. Hence, it is interesting to compare $t_\cc$ with $\frac{\beta^2}{2}$.
We display the dependence of $t_\cc$ on $\D$ and $\beta$ by writing $t_\cc=t_\cc(\D,\beta)$.
Recently, \cite{mourrat2024color} showed that for sufficiently large $\D$, there is a range of $\beta$ such that $t_\cc(\D,\beta) <\frac{\beta^2}{2}$ (color symmetry breaking in the standard Potts spin glass). 
In this work, however, we do not pursue questions along this line.

Lastly, we briefly mention other related works.
When $\D=2$, the Potts spin glass is essentially the same as the Sherrington--Kirkpatrick model~\cite{sherrington1975solvable,parisi79,parisi80,gue03,Tpaper,pan} (but we do not see the phenomenon discussed here, since the self-overlap of Ising spins is constantly one). The Potts spin glass was introduced in~\cite{elderfield1983curious} and the associated Parisi formula was first established in~\cite{pan.potts}. Recently,~\cite{bates2023parisi} showed that the spin-glass order parameter is color symmetric in the constrained model, which inspired \cite{chen2023onparisi} to show a similar result for the model with self-overlap correction. The argument in the two works was further generalized in~\cite{issa2024existence}. The studies of the non-disordered Curie--Weiss--Potts model include~\cite{ellis1992limit, costeniuc2005complete, eichelsbacher2015rates, lee2022energy}.
The Potts spin glass model in consideration here has a conventional order parameter, the mean magnetization, in addition to the spin glass order parameter. 
Studies of the interplay of the two kinds of parameters include~\cite{mottishaw1986first,chen2014mixed,camilli2022inference, Baldwin2023, chen2024free, baik2017fluctuations, banerjee2020fluctuation}. Spin glass models with self-overlap correction have been considered in~\cite{chen2023self,chen2023on}, inspired by the Hamilton--Jacobi equation approach to spin glass. Also, at the core of the analysis is the Hopf--Lax formula for the solution of such an equation. Considerations along this line include~\cite{mourrat2022parisi,mourrat2020extending,mourrat2021nonconvex,mourrat2023free,HJbook,HJcritical,guerra2001sum,barra1,barra2,abarra,barramulti,genovese2009mechanical,HB1,HBJ,chen2022statistical,chen2023free}.

\section{Proofs}
We prove the main result as described in the introduction. 
\rv{Since all proofs are presented in this section, we provide the following outline:
\begin{itemize}
    \item Section~\ref{s.2.1} establishes the analytic properties of $\sP_\beta$ (Lemma~\ref{l.sP_property}). 
    \item Section~\ref{s.2.2} recalls that for $t > 0$, $\lim_{N \to \infty} F_{\beta,N}(t,x)$ is given by a Hopf--Lax formula (Proposition~\ref{p.cvg_F_N}). We also collect useful properties of this limit in Lemmas~\ref{l.envelop} and~\ref{l.color_sym_max_in_R1}.
    \item Section~\ref{s.2.3} introduces $t_\cc$, defined explicitly in~\eqref{e.t_c=}, and proves its basic properties, particularly concerning the differentiability of $\lim_{N \to \infty} F_{\beta,N}(0,\cdot)$ (Proposition~\ref{p.t_c}).
    \item Section~\ref{s.2.4.color_sym_break} proves the breaking of color symmetry for $t > t_\cc$ (Proposition~\ref{p.color_symmetry}).
    \item Section~\ref{s.2.5} proves the ferromagnetic phase transition at $t_\cc$ (Proposition~\ref{p.mag}).
\end{itemize}
The developments in Sections~\ref{s.2.1} and~\ref{s.2.2} enable the proof of Proposition~\ref{p.t_c}, which in turn serves as a foundation for the separate proofs of Propositions~\ref{p.color_symmetry} and~\ref{p.mag}.
Theorem~\ref{thm.main} is established by combining Propositions~\ref{p.t_c}, \ref{p.color_symmetry}, and~\ref{p.mag}.
}

\subsection{Properties of the initial condition}\label{s.2.1}

We derive properties of $\sP_\beta$ given in~\eqref{e.sP(x)=}.

\begin{remark}[Matching notation in \cite{chen2023on}]\label{r.compare_note}

Since we will use results from~\cite{chen2023on}, 
we clarify that
\begin{align*}
    F_{\beta,N}(0,x) = F_N(0,\diag(x)),\quad\forall x \in \R^\D,
\end{align*}
in the Potts setting, where free energy on the right-hand side is the one considered in~\cite{chen2023on}. The relation holds due to~\eqref{e.self-overlap=m_N}. As a result of~\eqref{e.sP(x)=} and~\cite[(1.3) and (1.10)]{chen2023on}, we have
\begin{align*}
    \sP_\beta(x) = \sP(\diag(x)),\quad\forall x\in\R^\D
\end{align*}
where $\sP$ is defined in~\cite[(1.10)]{chen2023on}. There, the domain for $F_N$ and $\sP$ is the set of $\D\times\D$ real symmetric matrices. We also need one more important identity. Since the self-overlap $N^{-1}\sigma\sigma^\intercal$ is diagonal (see~\eqref{e.self-overlap=m_N}), we can infer from~\cite[Theorem~1.1 (1)]{chen2023on} that the symmetric matrix-valued $\nabla\sP$ is always diagonal. Using this and the previous display, we have
\begin{align}\label{e.diag(nabla_sP)}
    \diag\Ll(\nabla \sP_\beta(x)\Rr) = \nabla \sP(\diag(x)),\quad\forall x \in \R^\D.
\end{align}
Here, the everywhere differentiability of $\sP$ (and thus $\sP_\beta$) is ensured by~\cite[Proposition~2.3]{chen2023on}.
\qed
\end{remark}

\begin{lemma}[Properties of the initial condition]\label{l.sP_property}
The function $\sP_\beta:\R^\D\to\R$ is differentiable everywhere, Lipschitz, convex, and semi-concave. Moreover, $\one \cdot\nabla\sP_\beta(x) =1 $ for every $x\in \R^\D$.
\end{lemma}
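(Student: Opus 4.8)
The plan is to obtain convexity and the Lipschitz bound from soft finite-volume arguments, to read off the identity $\one\cdot\nabla\sP_\beta\equiv 1$ from an exact translation symmetry of $F_{\beta,N}(0,\cdot)$, and to import everywhere differentiability and semi-concavity from the analysis of~\cite{chen2023on} through the identification $\sP_\beta(x)=\sP(\diag(x))$ recorded in Remark~\ref{r.compare_note}.

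First I would fix $N$ and examine $x\mapsto F_{\beta,N}(0,x)$. Since $Nx\cdot m_N=\sum_{i=1}^N x\cdot\sigma_{\bullet i}$, the partition function inside~\eqref{e.F_N=} at $t=0$ reads $\int w_g(\sigma)\,e^{\sum_{i=1}^N x\cdot\sigma_{\bullet i}}\otimes_{i=1}^N\d\mu(\sigma_{\bullet i})$, where $w_g(\sigma)=\exp\big(\beta H_N(\sigma)-\tfrac{\beta^2}{2}N|m_N|^2\big)$ is a positive weight depending on the disorder but not on $x$; for each realization of $(g_{ij})$ this is the Laplace transform of a fixed positive measure, so its logarithm is convex in $x$, and averaging over the disorder keeps $F_{\beta,N}(0,\cdot)$ convex. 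Differentiating under the integral gives $\nabla_x F_{\beta,N}(0,x)=\E\la m_N\ra_{0,x}$, and by~\eqref{e.self-overlap=m_N} the vector $m_N$ always lies in the probability simplex on $\{1,\dots,\D\}$, hence in the closed unit ball, so $|\nabla_x F_{\beta,N}(0,x)|\le 1$ and $F_{\beta,N}(0,\cdot)$ is $1$-Lipschitz uniformly in $N$. Both convexity and the $1$-Lipschitz bound pass to the pointwise limit $N\to\infty$, whose existence is granted by~\cite[Theorem~1.1]{chen2023self}; thus $\sP_\beta$ is convex and $1$-Lipschitz.

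For the identity $\one\cdot\nabla\sP_\beta\equiv 1$ I would again use~\eqref{e.self-overlap=m_N}, which gives $\one\cdot m_N\equiv 1$: replacing $x$ by $x+s\one$ for $s\in\R$ changes $H_{\beta,N}(0,x,\sigma)$ by the constant $sN$, so $F_{\beta,N}(0,x+s\one)=F_{\beta,N}(0,x)+s$ for all $N$, $x$, $s$, and hence $\sP_\beta(x+s\one)=\sP_\beta(x)+s$ in the limit. Differentiating in $s$ at $s=0$, which is legitimate because $\sP_\beta$ is differentiable by Remark~\ref{r.compare_note}, yields $\one\cdot\nabla\sP_\beta(x)=1$.

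It remains to address everywhere differentiability and semi-concavity. The former is exactly~\cite[Proposition~2.3]{chen2023on} transported through Remark~\ref{r.compare_note}, using that $\diag$ is a linear isometry onto its image, so differentiability of $\sP$ at $\diag(x)$ descends to differentiability of $\sP_\beta$ at $x$. For semi-concavity the plan is to exploit that, as recalled below~\eqref{e.sP(x)=}, $\sP_\beta$ is an infimum over increasing paths of the Parisi functional of the self-overlap-corrected model, in which the external field enters only through one smooth term, essentially the single-site log-partition $x\mapsto\log\int e^{x\cdot\tau}\,\d\mu(\tau)$, whose Hessian equals the covariance of $\tau$ under the tilted single-site law and is bounded by a constant depending only on $\D$; an infimum of a family of functions sharing a uniform upper bound on their Hessians inherits that bound, which is semi-concavity. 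Equivalently, one invokes the corresponding regularity statement for $\sP$ proved in~\cite{chen2023on,chen2023self} and restricts to diagonal arguments; together with convexity this also gives $\sP_\beta\in C^{1,1}$. I expect semi-concavity to be the main obstacle: the other properties rest on elementary finite-$N$ identities, whereas the finite-volume Hessian $\nabla^2_x F_{\beta,N}(0,x)=N\,\E\big[\mathrm{Cov}_{\la\cdot\ra_{0,x}}(m_N)\big]$, although positive semidefinite (re-proving convexity), need not be bounded above uniformly in $N$ near a phase transition, so semi-concavity genuinely requires the Parisi representation and the Gaussian-interpolation/perturbation analysis of~\cite{chen2023on,chen2023self} rather than any soft finite-volume argument.
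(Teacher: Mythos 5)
Your proof is correct, and for two of the four claims it goes by a more elementary route than the paper does. For convexity and Lipschitz continuity, the paper simply cites the analogues from~\cite{chen2023on} (Lemma~2.1 and Proposition~2.3 there, via Remark~\ref{r.compare_note}), whereas you rederive them directly at finite $N$ from log-convexity of the Laplace transform and the simplex bound $|m_N|\le 1$; both routes work, and yours has the virtue of being self-contained. For the identity $\one\cdot\nabla\sP_\beta\equiv 1$, the paper's argument goes through the deeper representation $\nabla\sP(\diag(x))=\lim_{N}\E\la N^{-1}\sigma\sigma^\intercal\ra$ from~\cite[Theorem~1.1~(1)]{chen2023on} and then traces, while you instead observe the exact translation identity $F_{\beta,N}(0,x+s\one)=F_{\beta,N}(0,x)+s$, pass to the limit to get $\sP_\beta(x+s\one)=\sP_\beta(x)+s$, and differentiate in $s$ using the already-established differentiability of $\sP_\beta$. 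That is cleaner and avoids invoking the self-overlap characterization of the gradient; in fact the paper proves the same translation identity later (see~\eqref{e.sP(x+r1)=sP(x)+r} in Lemma~\ref{l.color_sym_max_in_R1}) and uses it there, so your derivation unifies these two steps. For everywhere differentiability you cite the same source and observation as the paper. For semi-concavity your plan coincides with the paper's: use the Parisi representation $\sP_\beta=\inf_\pi\sP_\beta(\pi,\cdot)$, the uniform (in $\pi$) bound on $\nabla^2_x\sP_\beta(\pi,x)$ from~\cite[Lemma~2.2]{chen2023on}, and the fact that an infimum of functions with a uniform concavity bound is semi-concave (the paper cites~\cite[Proposition~1.5]{cannarsa2004semiconcave} for this). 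You are also right that semi-concavity is the one claim that genuinely requires the Parisi structure rather than a soft finite-$N$ argument, since the finite-volume Hessian $N\,\E\,\mathrm{Cov}_{\la\cdot\ra_{0,x}}(m_N)$ is only controlled from below; this is a useful remark worth keeping even though it does not change the proof.
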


Here, a function is semi-concave if, after subtracting some quadratic function, it becomes concave.

\begin{proof}
We use Remark~\ref{r.compare_note} and results from~\cite{chen2023on}.
We know from~\cite[Lemma~2.2]{chen2023on} that the first and second order derivatives of $\sP_\beta(\pi,\cdot)$ are bounded everywhere on $\R^\D$ and uniformly in $\pi$. 
It is well known (e.g.\ see \cite[Proposition~1.5]{cannarsa2004semiconcave}) that the infimum of functions with uniform bounds on concavity is semi-concave.
Hence, $\sP_\beta$ is semi-concave.
Other properties in the first sentence of the statement can be found in~\cite[Lemma~2.1 and Proposition~2.3]{chen2023on}.
To show the second part, we use Remark~\ref{r.compare_note} to see
\begin{align}\label{e.1cdotnablasP_beta}
    \one \cdot \nabla \sP_\beta(x) = \diag(\one)\cdot \nabla\sP(\diag(x)),\quad\forall x \in \R^\D.
\end{align}
Here, on the right-hand side, $\nabla\sP$ is the derivative with respect to the entry-wise inner product of $\D\times\D$ matrices (Frobenius inner product). So, $\nabla\sP$ is $\R^{\D\times\D}$-valued. The dot product on the right-hand side is this inner product. Recall that the differentiability of $\sP$ is ensured by~\cite[Proposition~2.3]{chen2023on}. Applying~\cite[Theorem~1.1~(1)]{chen2023on} to the right-hand side of~\eqref{e.1cdotnablasP_beta}, we have
\begin{align*}
    \one\cdot \nabla\sP_\beta(x) = \lim_{N\to\infty}\diag(\one) \cdot \E \la N^{-1}\sigma\sigma^\intercal  \ra
\end{align*}
where the Gibbs measure $\la\cdot\ra$ is equal to~ the one in \eqref{e.gibbs} at $t=0$ and this $x$. From~\eqref{e.self-overlap=m_N}, we can deduce that $\diag(\one)\cdot \Ll(N^{-1}\sigma\sigma^\intercal\Rr)=1$. Inserting this into the above display gives the second part of this lemma.
\end{proof}

\subsection{Limit of free energy}\label{s.2.2}

Notice that the convex conjugate of $|\cdot|^2$ is $\frac{1}{4}|\cdot|^2$.
The following formula in~\eqref{e.hopf-lax} is known as the Hopf--Lax formula.
\begin{proposition}[Hopf--Lax formula for the limit]\label{p.cvg_F_N}
As $N\to\infty$, $F_{\beta,N}$ converges pointwise on $\R_+\times \R^\D$ to a Lipschitz and convex function $f_\beta$ given by
\begin{align}\label{e.hopf-lax}
    f_\beta(t,x) = \sup_{y\in\R^\D} \Ll\{\sP_\beta(x+y)-\frac{|y|^2}{4t}\Rr\},\quad\forall (t,x)\in\R_+\times \R^\D.
\end{align}
Moreover, for every $(t,x)\in\R_+\times \R^\D$, maximizers of the right-hand side in~\eqref{e.hopf-lax} exist.
\end{proposition}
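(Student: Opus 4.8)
The plan is to obtain the limit of $F_{\beta,N}(t,x)$ via an interpolation argument that exhibits the free energy as the solution of the Hamilton--Jacobi equation $\partial_t f_\beta = |\nabla_x f_\beta|^2$ with initial condition $\sP_\beta$, and then invoke the classical Hopf--Lax representation. Concretely, I would first record the finite-$N$ structure: writing $H_{\beta,N}(t,x,\sigma) = H_{\beta,N}(0,x,\sigma) + tN|m_N|^2$, the term $tN|m_N|^2$ is convex in $x$ when one differentiates in $x$ (since $\nabla_x$ of the free energy produces $\E\la m_N\ra$ and $N|m_N|^2$ is a square), so that $F_{\beta,N}$ satisfies the approximate equation $\partial_t F_{\beta,N} = |\nabla_x F_{\beta,N}|^2 + \tfrac{1}{N}(\text{error})$, where the error is controlled by the variance of $m_N$ under $\la\cdot\ra_{t,x}$. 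This is the standard Hamilton--Jacobi heuristic in the literature cited in the introduction (\cite{mourrat2022parisi,HJbook}, etc.), and the key identity $\partial_t F_{\beta,N} = \E\la|m_N|^2\ra$, $\nabla_x F_{\beta,N} = \E\la m_N\ra$ makes it precise.

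Next, I would handle convergence. The limit $\sP_\beta = f_\beta(0,\cdot)$ exists and is Lipschitz, convex, and semi-concave by Lemma~\ref{l.sP_property}. For the convergence at $t>0$, the cleanest route is: (i) show $\{F_{\beta,N}(t,\cdot)\}_N$ is uniformly Lipschitz in $(t,x)$ (Lipschitz in $x$ because $|m_N|\le 1$; Lipschitz in $t$ because $|m_N|^2\le 1$), hence precompact in the local uniform topology by Arzel\`a--Ascoli; (ii) identify any subsequential limit $f$ as a viscosity solution of $\partial_t f = |\nabla_x f|^2$ with $f(0,\cdot)=\sP_\beta$, using the vanishing-variance estimate for $m_N$ — here the convexity of $F_{\beta,N}(t,\cdot)$ in $x$ (which follows because $x\mapsto F_{\beta,N}(t,x)$ is the log of a sum of exponentials linear in $x$, hence convex) forces $\nabla_x F_{\beta,N}$ to converge at differentiability points, and the Gaussian concentration / convexity argument controls $\var\la m_N\ra$; (iii) invoke uniqueness of the viscosity solution together with the classical Hopf--Lax formula $f_\beta(t,x) = \sup_{y}\{\sP_\beta(x+y) - |y|^2/(4t)\}$, which is valid because $|\cdot|^2/(4t)$ is the convex conjugate of $t|\cdot|^2$ and $\sP_\beta$ is Lipschitz. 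Alternatively, since $\sP_\beta$ is convex, $f_\beta(t,\cdot)$ is itself convex as a sup of convex functions shifted and penalized, and one can verify the Hopf--Lax formula directly by a convex-duality computation, bypassing viscosity-solution machinery; I would likely present whichever is shorter given what \cite{chen2023on, chen2023self} already provide off the shelf.

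Finally, existence of maximizers in \eqref{e.hopf-lax}: fix $(t,x)$ with $t>0$. The function $y\mapsto \sP_\beta(x+y) - |y|^2/(4t)$ is continuous, and since $\sP_\beta$ is Lipschitz with constant $L$ (Lemma~\ref{l.sP_property}), it is bounded above by $\sP_\beta(x) + L|y| - |y|^2/(4t) \to -\infty$ as $|y|\to\infty$; hence the supremum is attained on a compact set. For $t=0$ the formula degenerates to $f_\beta(0,x)=\sP_\beta(x)$ with the trivial maximizer $y=0$, consistent with \eqref{e.sP(x)=}. The main obstacle I anticipate is step (ii) — rigorously upgrading the finite-$N$ identities into the statement that subsequential limits solve the Hamilton--Jacobi equation, i.e.\ proving that $\var_{\la\cdot\ra_{t,x}}(m_N)\to 0$ in a suitable averaged sense so that $\E\la|m_N|^2\ra - |\E\la m_N\ra|^2 \to 0$. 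This is where one needs either a Gaussian-interpolation/concentration bound on $F_{\beta,N}$ in $x$ (differentiating twice in $x$ and using convexity to get one-sided control, then Lipschitzness for the other side, as in the standard argument for convex perturbations) or a direct appeal to the corresponding estimate already established in \cite{chen2023self,chen2023on} for the base model, transported along the interpolation in $t$. Everything else is routine soft analysis (Arzel\`a--Ascoli, convex conjugacy, coercivity).
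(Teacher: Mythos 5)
Your proposal is correct in outline and reconstructs, more or less from scratch, an argument that the paper outsources entirely: the paper's proof of Proposition~\ref{p.cvg_F_N} is a pointer to \cite[Proposition~5.2]{chen2023self} for the convergence and the Hopf--Lax representation (with the simplification that the Potts self-overlap is diagonal, so the symmetric-matrix variable collapses to $\R^\D$), to \cite[Lemma~5.1]{chen2023self} for uniform Lipschitzness and convexity, and to coercivity from the Lipschitzness of $\sP_\beta$ for the existence of maximizers. So you are not wrong, but you are doing work the author deliberately did not redo.

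Two calibrations are worth making. First, for this particular quadratic nonlinearity the two directions of the Hopf--Lax identity are quite asymmetric: the bound $\liminf_N F_{\beta,N}(t,x) \geq \sup_y\{\sP_\beta(x+y) - |y|^2/(4t)\}$ is immediate from the pointwise Legendre inequality $tN|m_N|^2 \geq Ny\cdot m_N - N|y|^2/(4t)$ inside the exponential, which gives $F_{\beta,N}(t,x) \geq F_{\beta,N}(0,x+y) - |y|^2/(4t)$ for every fixed $y$ before any limit is taken; one then sends $N\to\infty$ and takes the supremum over $y$. Your ``main obstacle'' --- establishing $\E\la|m_N|^2\ra - |\E\la m_N\ra|^2 \to 0$ in an $x$-averaged sense via convexity and a.e.\ differentiability --- therefore governs only the converse upper bound, not both directions. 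Second, routing the upper bound through viscosity-solution uniqueness for an unbounded initial condition is heavier machinery than the situation requires: since $F_{\beta,N}(t,\cdot)$ is convex and $\sP_\beta$ is convex and semi-concave, the direct convex-duality computation you mention in passing as an alternative, combined with the concentration estimate, is closer to what \cite{chen2023self} actually does and sidesteps verifying the viscosity inequalities at non-differentiability points. Your treatment of the existence of maximizers and of the degenerate case $t=0$ matches the paper.
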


\begin{proof}
This proposition can be deduced following the proof of~\cite[Proposition~5.2]{chen2023self}. The difference is that, there, the domain for the second variable is the space of real symmetric matrices, while the domain here is $\R^\D$. The simplification here is allowed by the fact that the self-overlap $N^{-1}\sigma\sigma^\intercal$ is diagonal as in~\eqref{e.self-overlap=m_N}.
The convexity and Lipschitzness of $f_\beta$ are due to the fact that $F_{N,\beta}$ is convex and Lipschitz uniformly in $N$, which can be deduced as in~\cite[Lemma~5.1]{chen2023self}. Since $\sP_\beta$ is Lipschitz (see Lemma~\ref{l.sP_property}) and thus grows at most linearly, the existence of maximizers follows.
\end{proof}

We need the following lemma to characterize the differentiability of $f_\beta(t,\cdot)$.

\begin{lemma}[Envelop theorem]\label{l.envelop}
Let $x\in\R^\D$.
Then, $f_\beta(t,\cdot)$ is differentiable at $x$ if and only if the set
\begin{align*}
    \Ll\{\nabla \sP_\beta(x+y)\ \big| \ \text{$y$ maximizes the right-hand side of~\eqref{e.hopf-lax}}\Rr\}
\end{align*}
is a singleton. In this case, $\nabla f_\beta(t,x) = \nabla\sP_\beta(x+y)$ for every maximizer $y$.
\end{lemma}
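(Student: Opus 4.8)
The plan is to use the standard envelope/Danskin-type characterization of differentiability of a sup of smooth functions, combined with the special structure provided by Proposition~\ref{p.cvg_F_N} (existence of maximizers) and Lemma~\ref{l.sP_property} (everywhere differentiability of $\sP_\beta$). Write $g_{t,x}(y) = \sP_\beta(x+y) - \frac{|y|^2}{4t}$, so that $f_\beta(t,x) = \sup_y g_{t,x}(y)$, and let $M(t,x)$ denote the (nonempty, by Proposition~\ref{p.cvg_F_N}) set of maximizers. The first step is to record the two one-sided facts. For the "only if" direction: if $f_\beta(t,\cdot)$ is differentiable at $x$, then for any maximizer $y \in M(t,x)$ and any direction $v$, the inequality $f_\beta(t,x+hv) \ge \sP_\beta(x+hv+y) - \frac{|y|^2}{4t}$ holds for all $h$, and equality holds at $h=0$; dividing by $h$ and letting $h \to 0^\pm$ shows $\nabla f_\beta(t,x)\cdot v = \nabla\sP_\beta(x+y)\cdot v$ for all $v$, hence $\nabla\sP_\beta(x+y) = \nabla f_\beta(t,x)$. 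Since this holds for every maximizer $y$, the displayed set is the singleton $\{\nabla f_\beta(t,x)\}$, which also gives the final formula.

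For the "if" direction, suppose the set $\{\nabla\sP_\beta(x+y) : y \in M(t,x)\}$ is a singleton, say $\{p\}$. The goal is to show $f_\beta(t,\cdot)$ is differentiable at $x$ with gradient $p$. Since $f_\beta(t,\cdot)$ is convex (Proposition~\ref{p.cvg_F_N}), it suffices to show that its subdifferential $\partial f_\beta(t,x)$ is the singleton $\{p\}$. On one hand, for any fixed maximizer $y_0 \in M(t,x)$, the function $x' \mapsto \sP_\beta(x'+y_0) - \frac{|y_0|^2}{4t}$ is a smooth lower bound for $f_\beta(t,\cdot)$ that touches it at $x$, so its gradient $\nabla\sP_\beta(x+y_0) = p$ lies in $\partial f_\beta(t,x)$. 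On the other hand, I need the reverse inclusion $\partial f_\beta(t,x) \subseteq \{p\}$. Take $q \in \partial f_\beta(t,x)$; I want to produce a maximizer $y$ with $\nabla\sP_\beta(x+y) = q$, which forces $q = p$ by the singleton hypothesis. The natural route is an upper semicontinuity argument: approach $x$ along a direction, use differentiability of $\sP_\beta$ and the existence of maximizers at nearby points, extract a convergent subsequence of maximizers (using that $\sP_\beta$ is Lipschitz, so maximizers $y$ satisfy $|y| \le 2t\,\mathrm{Lip}(\sP_\beta)$ and hence stay in a fixed compact set), and pass to the limit using continuity of $\nabla\sP_\beta$—which holds since $\sP_\beta$ is $C^1$ (everywhere differentiable and convex, hence $\nabla\sP_\beta$ is continuous). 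This yields that every subgradient arises as $\nabla\sP_\beta(x+y)$ for some $y \in M(t,x)$, completing the inclusion.

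The main obstacle will be the reverse inclusion in the "if" direction, i.e.\ verifying that \emph{every} element of $\partial f_\beta(t,x)$ is realized as $\nabla\sP_\beta(x+y)$ for some maximizer $y$. The cleanest way to see this is probably to invoke the general fact that for $f_\beta(t,\cdot)$ convex, $q \in \partial f_\beta(t,x)$ if and only if $q$ is a limit of $\nabla f_\beta(t,x_k)$ for some sequence $x_k \to x$ of differentiability points (Rademacher plus a standard convex-analysis lemma), and at each such $x_k$ the "only if" direction already gives $\nabla f_\beta(t,x_k) = \nabla\sP_\beta(x_k + y_k)$ for maximizers $y_k$; then compactness of the maximizer set and continuity of $\nabla\sP_\beta$ give a maximizer $y$ at $x$ (using upper semicontinuity of the argmax, which follows from continuity of $g_{t,\cdot}(\cdot)$ and the uniform compact bound on maximizers) with $\nabla\sP_\beta(x+y) = q$. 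I would also note that one must handle $t = 0$ trivially (there $f_\beta(0,\cdot) = \sP_\beta$ and the statement is immediate), so the argument above is for $t > 0$ where $\frac{|y|^2}{4t}$ is genuinely coercive.
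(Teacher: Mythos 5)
Your proof is correct, but it takes a genuinely different route from the paper. The paper's proof of Lemma~\ref{l.envelop} is a two-sentence citation: it simply invokes~\cite[Theorem~2.21]{HJbook} (a general envelope theorem for Hopf--Lax type suprema) and notes that the linear growth of $\sP_\beta$ from Lemma~\ref{l.sP_property} ensures the hypotheses of that theorem are met. You instead give a self-contained derivation from convex analysis: the ``only if'' direction via a first-order expansion of the minorant $h \mapsto \sP_\beta(x+hv+y) - |y|^2/(4t)$ at its touching point, and the ``if'' direction by showing $\partial f_\beta(t,x)=\{p\}$ through the fact that $\partial f_\beta(t,x)$ is the convex hull of limits of gradients at nearby differentiability points, combined with compactness of maximizers (from Lipschitzness of $\sP_\beta$, giving $|y|\leq 4\,\mathrm{Lip}(\sP_\beta)\,t$), upper semicontinuity of the argmax map, and continuity of $\nabla\sP_\beta$ (a convex, everywhere differentiable function on $\R^\D$ is automatically $C^1$). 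The tradeoff is the usual one: your argument makes the lemma independent of the external reference and exhibits exactly which structural facts about $\sP_\beta$ are used, at the cost of some length; the paper's version is more economical but delegates the work.

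One small imprecision worth flagging: you state that ``$q \in \partial f_\beta(t,x)$ if and only if $q$ is a limit of $\nabla f_\beta(t,x_k)$ for some sequence $x_k\to x$ of differentiability points.'' The correct statement (Rockafellar, Theorem~25.6) is that $\partial f_\beta(t,x)$ equals the closed \emph{convex hull} of the set $S(x)$ of such limits; e.g.\ for $f(x)=|x|$ on $\R$ one has $S(0)=\{-1,1\}$ but $\partial f(0)=[-1,1]$. This does not affect your conclusion, since you end up showing $S(x)=\{p\}$, whose convex hull is still $\{p\}$, but the intermediate claim should be stated with the convex hull.
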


\begin{proof}
This is a result of the envelope theorem, for which we refer to the version stated as~\cite[Theorem~2.21]{HJbook}. The linear growth of $\sP_\beta$ as implied by Lemma~\ref{l.sP_property} ensures the conditions in that theorem are satisfied.
\end{proof}

Recall the action of color permutation in~\eqref{e.x^s=} and~\eqref{e.permute_spin}.
Since the uniform distribution of Potts spins is invariant under the action as in~\eqref{e.permute_spin}, 
a simple observation from~\eqref{e.F_N=}, \eqref{e.sP(x)=}, and~\eqref{e.hopf-lax} is that
\begin{align}\label{e.symmetry_F_sP_f}
    F_{\beta,N}(t,x) = F_{\beta,N}(t,x^\bfs) ,\qquad \sP_\beta(x) = \sP_\beta(x^\bfs) ,\qquad f_\beta(t,x) = f_\beta(t,x^\bfs)
\end{align}
for every $(t,x) \in\R_+\times \R^\D$ and $\bfs \in \Sym$. As a result, we also have
\begin{align}\label{e.nabla_sP=D^-1one}
    x \in \R\one \quad \Longrightarrow \quad \nabla\sP_\beta(x) = \D^{-1} \one.
\end{align}
Here and henceforth, $\R\one = \{r\one:\:r\in\R\}$.
Indeed, from~\eqref{e.symmetry_F_sP_f}, we get $\nabla\sP_\beta(x^\bfs) = \Ll(\nabla\sP_\beta(x)\Rr)^\bfs$ for every $x\in\R^\D$. Then, if $x\in \R\one$, we have $x^\bfs =x$ and thus $\nabla\sP_\beta(x) = \Ll(\nabla\sP_\beta(x)\Rr)^\bfs$ for every $\bfs\in\Sym$. This along with $\one\cdot \nabla\sP(x)=1$ in Lemma~\ref{l.sP_property} verifies~\eqref{e.nabla_sP=D^-1one}.

We are mostly interested in the zero external field case. 
From Proposition~\ref{p.cvg_F_N}, we have, for every $t\in\R_+$,
\begin{align}\label{e.f(t,0)=sup}
    \lim_{N\to\infty} F_{\beta,N}(t,0) = f_\beta(t,0) = \sup_{y\in\R^\D} \Ll\{\sP_\beta(y)-\frac{|y|^2}{4t}\Rr\}.
\end{align}
The only subset of $\R^\D$ that is invariant under the action of $\Sym$ as in~\eqref{e.x^s=} is $\R\one $. We are particularly interested in the case where the supremum in~\eqref{e.f(t,0)=sup} is achieved on $\R\one$.

\begin{lemma}\label{l.color_sym_max_in_R1}
For every $t\in\R_+$, we have
\begin{align}\label{e.f>=}
    f_\beta(t,0) \geq \sup_{y\in\R\one} \Ll\{\sP_\beta(y)-\frac{|y|^2}{4t}\Rr\} =  \sP_\beta(0)+\frac{t }{\D},
\end{align}
where the supremum has a unique maximizer $y=\frac{2t}{\D}\one$.
\end{lemma}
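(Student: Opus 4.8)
The plan is to separate the two assertions in \eqref{e.f>=}. The inequality $f_\beta(t,0)\ge \sup_{y\in\R\one}\{\sP_\beta(y)-|y|^2/(4t)\}$ is immediate from \eqref{e.f(t,0)=sup}: the supremum there runs over all of $\R^\D$, and $\R\one\subseteq\R^\D$, so restricting the feasible set cannot increase the value. So the real content is the explicit evaluation of $\sup_{y\in\R\one}\{\sP_\beta(y)-|y|^2/(4t)\}$ together with the uniqueness of the maximizer.

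For that, I would parametrize the line $\R\one$ by $y=r\one$, $r\in\R$, and use $|y|^2=\D r^2$, so that the quantity to be maximized becomes the one-variable function $g(r)=\sP_\beta(r\one)-\frac{\D r^2}{4t}$. The key observation is that $r\mapsto\sP_\beta(r\one)$ is affine with slope $1$: since $\sP_\beta$ is differentiable everywhere by Lemma~\ref{l.sP_property}, the chain rule gives $\frac{\d}{\d r}\sP_\beta(r\one)=\one\cdot\nabla\sP_\beta(r\one)$, which equals $1$ by the identity $\one\cdot\nabla\sP_\beta\equiv 1$ from Lemma~\ref{l.sP_property} (alternatively, one may invoke \eqref{e.nabla_sP=D^-1one} to get $\nabla\sP_\beta(r\one)=\D^{-1}\one$ directly, whence the directional derivative along $\one$ is $1$). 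Integrating, $\sP_\beta(r\one)=\sP_\beta(0)+r$, so $g(r)=\sP_\beta(0)+r-\frac{\D r^2}{4t}$. For $t>0$ this is a strictly concave quadratic in $r$, hence has a unique maximizer, found from $g'(r)=1-\frac{\D r}{2t}=0$, namely $r=\frac{2t}{\D}$, i.e.\ $y=\frac{2t}{\D}\one$; substituting back gives $g\big(\tfrac{2t}{\D}\big)=\sP_\beta(0)+\tfrac{t}{\D}$, which is exactly the right-hand side of \eqref{e.f>=}, and strict concavity yields uniqueness. The degenerate case $t=0$ I would dispatch separately, and it is trivial: with the convention built into the Hopf--Lax formula \eqref{e.hopf-lax} the penalty $|y|^2/(4t)$ forces $y=0$, so the supremum over $\R\one$ equals $\sP_\beta(0)=\sP_\beta(0)+0/\D$, and the unique maximizer $y=0$ agrees with $\frac{2t}{\D}\one$ at $t=0$.

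There is no genuine obstacle in this argument; the only points needing a modicum of care are the justification that $\sP_\beta$ is affine along $\R\one$ (which is entirely a consequence of the gradient identity $\one\cdot\nabla\sP_\beta\equiv1$ in Lemma~\ref{l.sP_property}, combined with differentiability of $\sP_\beta$) and keeping track of the $t=0$ convention inherited from Proposition~\ref{p.cvg_F_N}.
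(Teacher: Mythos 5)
Your proof is correct and matches the paper's strategy: both reduce the restricted supremum to the one–variable quadratic $r\mapsto\sP_\beta(0)+r-\D r^2/(4t)$ via the identity $\sP_\beta(r\one)=\sP_\beta(0)+r$, and then solve it explicitly. The only (minor) difference is in how that affine identity is obtained: you integrate the gradient identity $\one\cdot\nabla\sP_\beta\equiv1$ from Lemma~\ref{l.sP_property}, whereas the paper derives the stronger exact relation $\sP_\beta(x+r\one)=\sP_\beta(x)+r$ directly from the finite-$N$ shift $F_{\beta,N}(t,x+r\one)=F_{\beta,N}(t,x)+r$, which follows from $\one\cdot m_N=1$ without invoking differentiability.
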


\begin{proof}
The inequality follows from~\eqref{e.hopf-lax}.
Since $\one \cdot m_N = 1$ (see~\eqref{e.self-overlap=m_N}), using the definition of $F_{\beta,N}$ in~\eqref{e.F_N=}
and~\eqref{e.sP(x)=}, 
we have
\begin{gather}
    F_{\beta,N}(t,x+r\one) = F_{\beta,N}(t,x)+ r,\quad\forall x \in \R^\D ,\, r\in \R; \notag
\\
    \label{e.sP(x+r1)=sP(x)+r}
    \sP_\beta(x+r\one) = \sP_\beta(x)+r,\quad\forall x \in \R^\D ,\, r\in \R.
\end{gather}
Thus, the equality in~\eqref{e.f>=} follows from~\eqref{e.sP(x+r1)=sP(x)+r} at $x=0$ and simple computation.
\end{proof}

\subsection{Critical point}\label{s.2.3}
We define
\begin{align}\label{e.t_c=}
    t_\cc = \inf\Ll\{t>0:\:f_\beta(t,0)> \sP_\beta(0)+\frac{t}{\D}\Rr\}.
\end{align}
This is the critical value of $t$ that marks the onset of color symmetry breaking and the transition.

\begin{proposition}[Properties of $t_\cc$]\label{p.t_c}
The following holds:
\begin{enumerate}
    \item \label{i.p.t_c_1} We have $0<t_\cc<\infty$.
    \item \label{i.p.t_c_2} For every $t<t_\cc$, we have that $f_\beta(t,\cdot)$ is differentiable at $0$ (with $\nabla f_\beta(t,0)= \D^{-1}\one$) and that the Hopf--Lax formula of $f_\beta(t,0)$ as in~\eqref{e.hopf-lax} has a unique maximizer $\frac{2t}{\D}\one$.
    \item \label{i.p.t_c_3} For every $t>t_\cc$, we have that $f_\beta(t,0)>\sP_\beta(0)+\frac{t}{\D}$; that $f_\beta(t,\cdot)$ is not differentiable at $0$; and that the Hopf--Lax formula of $f_\beta(t,0)$ has more than one maximizer, none of which is equal to $\frac{2t}{\D}\one$.
\end{enumerate}
\end{proposition}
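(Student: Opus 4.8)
The plan is to reduce the whole statement to the behaviour of the scalar function $g(t):=\sup_{\bar y\in\one^\perp}\{\sP_\beta(\bar y)-|\bar y|^2/(4t)\}$, where $\one^\perp$ denotes the hyperplane orthogonal to $\one$. Writing a generic $y\in\R^\D$ as $y=\bar y+r\one$ with $\bar y\in\one^\perp$, and using $\sP_\beta(\bar y+r\one)=\sP_\beta(\bar y)+r$ from~\eqref{e.sP(x+r1)=sP(x)+r} together with $|y|^2=|\bar y|^2+\D r^2$, the Hopf--Lax supremum~\eqref{e.hopf-lax} at $x=0$ decouples into $\sup_{\bar y}\{\sP_\beta(\bar y)-|\bar y|^2/(4t)\}$ plus the scalar optimization $\sup_{r\in\R}\{r-\D r^2/(4t)\}$, the latter being strictly concave with unique maximizer $r=2t/\D$ and value $t/\D$. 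Hence $f_\beta(t,0)=g(t)+t/\D$ for every $t>0$. Since $g$ is nondecreasing and $g(t)\ge\sP_\beta(0)$ (take $\bar y=0$), the set $\{t>0:g(t)>\sP_\beta(0)\}$ is upward closed, so by~\eqref{e.t_c=} one gets $g(t)=\sP_\beta(0)$ for $t<t_\cc$ and $g(t)>\sP_\beta(0)$ for $t>t_\cc$. Two further facts will be used repeatedly: any maximizer $y$ in~\eqref{e.hopf-lax} is a critical point, so $\nabla\sP_\beta(x+y)=y/(2t)$ by differentiability of $\sP_\beta$ (Lemma~\ref{l.sP_property}); and, by the $\Sym$-invariance~\eqref{e.symmetry_F_sP_f} of $\sP_\beta$ and of $|\cdot|$, the set of maximizers is stable under $\bfs\mapsto y^\bfs$, whose only fixed points lie on $\R\one$.

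For~\eqref{i.p.t_c_1}: that $t_\cc>0$ comes from semi-concavity of $\sP_\beta$ together with $\nabla\sP_\beta(0)=\D^{-1}\one$ (see~\eqref{e.nabla_sP=D^-1one}), which give $\sP_\beta(\bar y)-\sP_\beta(0)\le C|\bar y|^2$ on $\one^\perp$ for some $C\ge0$, so $g(t)=\sP_\beta(0)$ as soon as $1/(4t)>C$. For $t_\cc<\infty$ I will show $\sP_\beta$ is not affine; granting this, $\sP_\beta|_{\one^\perp}$ is convex and nonconstant (a constant restriction would force $\sP_\beta(x)=\sP_\beta(0)+\D^{-1}\one\cdot x$ via~\eqref{e.sP(x+r1)=sP(x)+r}), hence some $\bar y_0\in\one^\perp\setminus\{0\}$ has $\sP_\beta(\bar y_0)>\sP_\beta(0)$, whence $g(t)>\sP_\beta(0)$ for $t$ large. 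Non-affineness follows from a one-configuration lower bound: keeping only the configuration with all spins equal to $e_1$ in the integral defining $F_{\beta,N}(0,Re_1)$, and using $\E H_N=0$, gives $\sP_\beta(Re_1)\ge R-\tfrac{\beta^2}{2}-\log\D$, which for large $R$ exceeds the affine candidate $\sP_\beta(0)+R/\D$ since $\D\ge2$ (comparison with the $\beta=0$ partition function gives the same conclusion).

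For~\eqref{i.p.t_c_2}: fix $t<t_\cc$, so $g(t)=\sP_\beta(0)$ and $f_\beta(t,0)=\sP_\beta(0)+t/\D$. The crucial point is the strict inequality $\sP_\beta(\bar y)-|\bar y|^2/(4t)<\sP_\beta(0)$ for all $\bar y\in\one^\perp\setminus\{0\}$: if equality held at some $\bar y_1$, convexity of $\sP_\beta$ gives $\sP_\beta(s\bar y_1)-\sP_\beta(0)\ge s(\sP_\beta(\bar y_1)-\sP_\beta(0))=s|\bar y_1|^2/(4t)$ for $s\ge1$, and then for any $t'>t$, choosing $s\in(1,t'/t)$ yields $g(t')>\sP_\beta(0)$, forcing $t_\cc\le t$, a contradiction. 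Together with strict concavity of $r\mapsto r-\D r^2/(4t)$ and the decoupling above, the unique Hopf--Lax maximizer of $f_\beta(t,0)$ is therefore $\frac{2t}{\D}\one$; differentiability of $f_\beta(t,\cdot)$ at $0$ with $\nabla f_\beta(t,0)=\D^{-1}\one$ then follows from Lemma~\ref{l.envelop} and~\eqref{e.nabla_sP=D^-1one}.

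For~\eqref{i.p.t_c_3}: fix $t>t_\cc$, so $f_\beta(t,0)=g(t)+t/\D>\sP_\beta(0)+t/\D$. Since the functional in~\eqref{e.hopf-lax} has value $\sP_\beta(0)+t/\D$ at $\frac{2t}{\D}\one$ (Lemma~\ref{l.color_sym_max_in_R1}), no maximizer equals $\frac{2t}{\D}\one$. Pick any maximizer $y_*$ (existence by Proposition~\ref{p.cvg_F_N}); the critical-point relation $\nabla\sP_\beta(y_*)=y_*/(2t)$ with~\eqref{e.nabla_sP=D^-1one} shows $y_*\notin\R\one$ (else $y_*=\frac{2t}{\D}\one$), so some permutation gives a distinct maximizer $y_*^\bfs\ne y_*$, proving non-uniqueness; and since $\nabla\sP_\beta(y_*)=y_*/(2t)\notin\R\one$, we get $\nabla\sP_\beta(y_*^\bfs)=(\nabla\sP_\beta(y_*))^\bfs\ne\nabla\sP_\beta(y_*)$, so the gradient set in Lemma~\ref{l.envelop} is not a singleton and $f_\beta(t,\cdot)$ is not differentiable at $0$. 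The step I expect to be the main obstacle is $t_\cc<\infty$: all the soft properties of $\sP_\beta$ in Lemma~\ref{l.sP_property} are compatible with the affine map $\sP_\beta(0)+\D^{-1}\one\cdot x$, so one genuinely needs a quantitative lower bound on $\sP_\beta$ far from the origin (the one-configuration estimate being the natural input); the strict-inequality argument via convex rescaling in~\eqref{i.p.t_c_2} is the other slightly delicate point.
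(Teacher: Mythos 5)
Your proof is correct, and it takes a genuinely different route from the paper's. The organizing idea you introduce --- decomposing $y=\bar y+r\one$ with $\bar y\in\one^\perp$ so that~\eqref{e.hopf-lax} at $x=0$ decouples as $f_\beta(t,0)=g(t)+t/\D$ with $g(t)=\sup_{\bar y\in\one^\perp}\{\sP_\beta(\bar y)-|\bar y|^2/(4t)\}$ --- does not appear in the paper, which works directly with the full Hopf--Lax supremum. This decomposition buys you a cleaner characterization of $t_\cc$ (as the time at which $g$ exceeds $\sP_\beta(0)$, with $g$ nondecreasing) and makes the uniqueness-of-maximizer claim in part~\eqref{i.p.t_c_2} transparent: you prove the strict inequality $\sP_\beta(\bar y)-|\bar y|^2/(4t)<\sP_\beta(0)$ for $\bar y\ne0$ via a convex-rescaling argument ($s\mapsto\sP_\beta(s\bar y_1)$), and the $r$-part is trivially strictly concave. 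The paper instead argues by contradiction assuming non-differentiability, extracts a maximizer $y$ with $\nabla\sP_\beta(y)\ne\D^{-1}\one$, and compares $s$-derivatives of $\sP_\beta(y)-|y|^2/(4s)$ and $\sP_\beta(0)+s/\D$ at $s=t$ using Jensen; the conclusions are the same. For $t_\cc<\infty$, the paper compares with the Curie--Weiss--Potts free energy via a Gaussian-interpolation lemma and the explicit $\psi$, while you obtain the same linear-in-$R$ lower bound $\sP_\beta(Re_1)\ge R-\tfrac{\beta^2}{2}-\log\D$ by keeping a single configuration and using $\E H_N=0$; the two are quantitatively equivalent, but your version is self-contained and avoids introducing the auxiliary Curie--Weiss model. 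Your proofs of $t_\cc>0$ (semi-concavity on $\one^\perp$) and of part~\eqref{i.p.t_c_3} (permutation orbit of a non-$\R\one$ maximizer, envelope theorem) match the paper's reasoning. One small point worth making explicit when you claim $\sP_\beta|_{\one^\perp}$ nonconstant gives $\bar y_0$ with $\sP_\beta(\bar y_0)>\sP_\beta(0)$: this uses the standard fact that a convex function on all of $\R^{\D-1}$ which is bounded above must be constant.
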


From this proposition, we can obtain different equivalent definitions of $t_\cc$: as the supremum of $t$ such that $f_\beta(t,0)=\sP(0)+\frac{t}{\D}$; or that $f_\beta(t,\cdot)$ is differentiable at $0$; or that the Hopf--Lax formula has a unique maximizer. 

In the remainder of this subsection, we prove this proposition part by part.

\begin{proof}[Proof of $t_\cc>0$ in Proposition~\ref{p.t_c}~\eqref{i.p.t_c_1}]
Since $\sP_\beta$ is semi-concave (see Lemma~\ref{l.sP_property}), we have that $y\mapsto \sP_\beta(y)-\frac{|y|^2}{4t}$ is strictly concave for sufficiently small $t$. Hence, the maximizer in~\eqref{e.hopf-lax} must be unique, which we denote by $y_0$. 
Due to the symmetry of $\sP_\beta$ as in~\eqref{e.symmetry_F_sP_f}, we can see that $y_0^\bfs$ (defined as in~\eqref{e.x^s=}) is also a maximizer for any $\bfs \in\Sym$. Hence, the uniqueness of maximizers implies $y_0\in \R\one$. Then, Lemma~\ref{l.color_sym_max_in_R1} gives $f_\beta(t,0) = \sP_\beta(0)+\frac{t}{\D}$ for all sufficiently small $t>0$, which ensures $t_\cc>0$.
\end{proof}

\begin{proof}[Proof Proposition~\ref{p.t_c}~\eqref{i.p.t_c_2}]
Let $t< t_\cc$. By definition, we have that $f_\beta(t,0)= \sP_\beta(0)+\frac{t}{\D}$. 
We set $y_0 = \frac{2t}{\D}\one$ and we know from Lemma~\ref{l.color_sym_max_in_R1} that $y_0$ is a maximizer of the Hopf--Lax formula~\eqref{e.hopf-lax}. 
Due to~\eqref{e.nabla_sP=D^-1one}, we have $\nabla\sP_\beta(y_0)=\D^{-1}\one$. Now, we argue by contradiction and assume that $f_\beta(t,\cdot)$ is not differentiable at $0$. By the envelope theorem as stated in Lemma~\ref{l.envelop}, there must be a maximizer $y$ of~\eqref{e.hopf-lax} such that
\begin{align}\label{e.nabla_sP(y)_neq}
    \nabla \sP_\beta(y) \neq \D^{-1}\one.
\end{align}
In particular, this along with~\eqref{e.nabla_sP=D^-1one} implies $y\not\in \R \one$. Since $y$ is a maximizer and $t< t_\cc$, we have
\begin{align}\label{e.sP(y)-|y|^2/4t=sP(0)+t/D}
    \sP_\beta(y) - \frac{|y|^2}{4t} = \sP_\beta(0)+ \frac{t}{\D}.
\end{align}
In the following we define functions $g,\, g_0 : \R_{++}\to\R$ by
\begin{align*}
    g(s) = \sP_\beta(y)-\frac{|y|^2}{4s}, \qquad g_0(s) = \sP_\beta(0)+\frac{s}{\D},\qquad\forall s\in\R_{++}.
\end{align*}
We then analyze the behavior of these two functions as $s$ varies.

Since $y$ is a maximizer of~\eqref{e.hopf-lax}, we get
\begin{align}\label{e.nabla_sP(y)=y/2t}
    \nabla\sP_\beta(y) =\frac{y}{2t}.
\end{align}
As $\one\cdot\nabla\sP_\beta (y)=1$ due to Lemma~\ref{l.sP_property}, we have $\sum_{i=1}^\D \frac{y_i}{2t}=1$, which along with Jensen's inequality implies
\begin{align}\label{e.|y|^2/4t^2>1/D}
    \frac{|y|^2}{4t^2}\geq \frac{1}{\D}
\end{align}
where the equality is achieved if and only if $\frac{y}{2t}= \frac{1}{\D}\one$. This equality does not hold in this case due to~\eqref{e.nabla_sP(y)_neq} and~\eqref{e.nabla_sP(y)=y/2t}. Hence, the inequality in~\eqref{e.|y|^2/4t^2>1/D} is strict, which implies $g'(t)>g'_0(t)$. Meanwhile, \eqref{e.sP(y)-|y|^2/4t=sP(0)+t/D} gives $g(t)=g_0(t)$. Therefore, for all sufficiently small $\eps>0$, we have $g(t+\eps) > g_0(t+\eps)$, which along with the Hopf--Lax formula~\eqref{e.hopf-lax} yields
\begin{align*}
    f_\beta(t+\eps,0) > \sP_\beta(0)+\frac{t+\eps}{\D}.
\end{align*}
We can choose $\eps$ to satisfy $t+\eps<t_\cc$, which then reaches a contradiction with the definition of $t_\cc$.
Hence, $f_\beta(t,\cdot)$ must be differentiable at $0$ by the argument of contradiction.

Let $y$ be any maximizer of the Hopf--Lax formula. The envelope theorem stated as in Lemma~\ref{l.envelop} gives $\nabla f_\beta(t,0)= \nabla\sP_\beta(y)$. 
By~\eqref{e.symmetry_F_sP_f}, we also have $\nabla f_\beta(t,0) = \Ll(\nabla f_\beta(t,0)\Rr)^\bfs$ for every $\bfs \in \Sym$, which along with $\one\cdot \nabla f_\beta(t,0) = \one\cdot \nabla \sP_\beta(y) = 1$ (due to Lemma~\ref{l.sP_property}) implies $\nabla f_\beta(t,0) = \D^{-1} \one$.
Hence, this along with~\eqref{e.nabla_sP(y)=y/2t} yields $\frac{y}{2t} = \nabla f_\beta(t,0) = \D^{-1}\one$, which completes proof. 
\end{proof}

\begin{proof}[Proof Proposition~\ref{p.t_c}~\eqref{i.p.t_c_3}]
Notice that $s\mapsto f_\beta(s,0)$ is convex (see~\eqref{e.hopf-lax}), $f_\beta(0,0) =\sP_\beta(0)$, and $f_\beta(s,0)\geq \sP_\beta(0)+\frac{s}{\D}$ for every $s$. Hence, if $f_\beta(t,0)=\sP_\beta(0)+\frac{t}{\D}$ at some $t>t_\cc$, then we have $f_\beta(s,0)=\sP_\beta(0)+\frac{s}{\D}$ for every $s\in[t_\cc,t]$, which contradicts the definition of $t_\cc$. Therefore, we have $f_\beta(t,0)>\sP_\beta(0)+\frac{t}{\D}$ for every $t>t_\cc$. 

For the second statement, suppose that $f_\beta(t,\cdot)$ is differentiable at $0$ at some $t>t_\cc$. Let $y$ be any maximizer of the Hopf--Lax formula~\eqref{e.hopf-lax}. Then, from the last paragraph in the proof of Part~\eqref{i.p.t_c_2}, we have $y=2t \D^{-1}\one$. This implies that the Hopf--Lax formula~\eqref{e.hopf-lax} of $f_\beta(t,\cdot)$ maximizes over $\R\one$ which by Lemma~\ref{l.color_sym_max_in_R1} gives $f_\beta(t,0) = \sP_\beta(0)+\frac{t}{\D}$. This contradicts the result in the previous paragraph. Hence, $f_\beta(t,\cdot)$ cannot be differentiable at $0$ for any $t>t_\cc$.

Lastly, the Hopf--Lax formula must have more than one maximizer because otherwise, we can use the envelope theorem as stated in Lemma~\ref{l.envelop} to deduce that $f_\beta(t,\cdot)$ is differentiable at $0$. Also, none of the maximizers can be $\frac{2t}{\D}\one$, because otherwise we can use Lemma~\ref{l.color_sym_max_in_R1} to deduce $f_\beta(t,0)=\sP_\beta(0)+\frac{t}{\D}$, which contradicts the first part of the statement.
\end{proof}

To show that $t_\cc<\infty$, we need some preparation.
Denote by $F_N^\cw(t,x)$ be the free energy in~\eqref{e.F_N=} with $\beta$ set to be zero.
This is the free energy in the Curie--Weiss--Potts model.
We denote by $\psi = F_1^\cw(0,\cdot)$ the initial condition. In this case, we have $\psi = F_N^\cw(0,\cdot)$ for every $N\in\N$ and the explicit expression
\begin{align}\label{e.psi=_2}
    \psi(x) =\log \Ll(\sum_{d=1}^\D e^{x_d}\Rr)-\log \D,\quad\forall x\in\R^\D.
\end{align}
Proposition~\ref{p.cvg_F_N} yields that, for every $(t,x)\in\R_+\times \R^\D$,
\begin{align}\label{e.limF^CW_N_2}
    \lim_{N\in\infty}F^\cw_N(t,x) = \sup_{y\in\R^\D}\Ll\{\psi(x+y) - \frac{|y|^2}{4t}\Rr\}.
\end{align}
The difference between the free energy of the Potts spin glass and that of Curie--Weiss--Potts is also bounded.
\begin{lemma}\label{l.F-F^CW_2}
We have
\begin{align*}
    F_N^\cw(t,x) -\frac{\beta^2}{2}\leq F_{\beta,N}(t,x) \leq F_N^\cw(t,x)
\end{align*}
uniformly in $(t,x)\in\R_+\times \R^\D$ and $N\in\N$.
\end{lemma}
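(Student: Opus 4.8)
The plan is to compare the two Hamiltonians directly and then take expectations and logarithms. Recall from~\eqref{e.H_beta,N=} that
\[
    H_{\beta,N}(t,x,\sigma) = \beta H_N(\sigma) + \Ll(t-\tfrac{\beta^2}{2}\Rr)N|m_N|^2 + Nx\cdot m_N,
\]
so $H_{\beta,N}(t,x,\sigma) - H^{\cw}_N$-Hamiltonian $= \beta H_N(\sigma) - \tfrac{\beta^2}{2}N|m_N|^2$, where by ``$H^{\cw}_N$-Hamiltonian'' I mean the integrand producing $F^{\cw}_N$, i.e.\ $\beta=0$ in~\eqref{e.H_beta,N=}. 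Thus the whole comparison reduces to controlling the term $\beta H_N(\sigma) - \tfrac{\beta^2}{2}N|m_N|^2$ inside the Gibbs average.

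First I would establish the upper bound $F_{\beta,N}(t,x)\leq F^{\cw}_N(t,x)$. Fix the disorder $(g_{ij})$ and apply Jensen's inequality in the form $\E\exp(\beta H_N(\sigma)) \le \exp(\tfrac{\beta^2}{2}\var(H_N(\sigma)))$ after interchanging the $\E$ over $(g_{ij})$ with the $\sigma$-integral — more precisely, write
\[
    \E \log \int \exp\Ll(\beta H_N(\sigma) - \tfrac{\beta^2}{2}N|m_N|^2 + (\text{CW terms})\Rr)\,\d\mu^{\otimes N}
    \le \log \int \E\exp\Ll(\beta H_N(\sigma) - \tfrac{\beta^2}{2}N|m_N|^2 + (\text{CW terms})\Rr)\,\d\mu^{\otimes N}
\]
by Jensen (concavity of $\log$ and Fubini, the CW terms being deterministic). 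Since $H_N(\sigma)$ is, for fixed $\sigma$, a centered Gaussian with variance exactly $N|m_N|^2$ — this is precisely the computation recalled below~\eqref{e.H(0,0,sigma)=}, using $N^{-1}\sigma\sigma^\intercal=\diag(m_N)$ and hence $\frac1N\sum_{i,j}(\sigma_{\bullet i}\cdot\sigma_{\bullet j})^2 = N|m_N|^2$ — we get $\E\exp(\beta H_N(\sigma)) = \exp(\tfrac{\beta^2}{2}N|m_N|^2)$, so the self-overlap correction cancels this Gaussian moment exactly. What remains inside the integral is $\exp((\text{CW terms}))$, giving $F^{\cw}_N(t,x)$.

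For the lower bound $F_{\beta,N}(t,x)\ge F^{\cw}_N(t,x) - \tfrac{\beta^2}{2}$, I would argue that $\E\beta H_N(\sigma)=0$ for each fixed $\sigma$, so by Jensen's inequality in the other direction (pushing $\E$ inside the $\sigma$-integral is not available here; instead use that $\log\int e^{\phi+\psi}\ge \log\int e^{\phi} + \inf\psi$ type bounds, or better: use $\E H_N(\sigma)=0$ together with $|m_N|\le 1$). Concretely: since $\one\cdot m_N=1$ and $m_N$ has nonnegative entries by~\eqref{e.self-overlap=m_N}, we have $|m_N|^2\le 1$, hence $-\tfrac{\beta^2}{2}N|m_N|^2 \ge -\tfrac{\beta^2}{2}N$, so $H_{\beta,N}(t,x,\sigma) \ge \beta H_N(\sigma) - \tfrac{\beta^2}{2}N + (\text{CW terms})$. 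Taking $\tfrac1N\E\log\int$ and using Jensen ($\E\log\int e^{\beta H_N(\sigma)+(\text{rest})} \ge \E\int (\beta H_N(\sigma))\,\d\langle\cdot\rangle + \E\log\int e^{(\text{rest})}$ by the Gibbs variational principle / first-order expansion, with $\E H_N = 0$ killing the first term) yields the claim. The cleanest route for the lower bound is the Gibbs variational principle: $\log\int e^{A+B}\,\d\nu \ge \log\int e^{B}\,\d\nu + \langle A\rangle_B$ where $\langle\cdot\rangle_B$ is the Gibbs measure with weight $e^B$; taking $A=\beta H_N(\sigma)$, $B=-\tfrac{\beta^2}{2}N|m_N|^2+(\text{CW terms})$, and using $\E\langle\beta H_N(\sigma)\rangle_B$ — careful, $\langle\cdot\rangle_B$ still depends on the disorder through nothing since $B$ is disorder-free, so $\E\langle\beta H_N(\sigma)\rangle_B = \langle \beta\E H_N(\sigma)\rangle_B = 0$ — gives $F_{\beta,N}(t,x)\ge \tfrac1N\E\log\int e^B = \tfrac1N\log\int e^{-\frac{\beta^2}{2}N|m_N|^2+(\text{CW terms})} \ge F^{\cw}_N(t,x)-\tfrac{\beta^2}{2}$, using $|m_N|^2\le1$ in the last step.

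The main obstacle — really the only subtle point — is getting the direction of Jensen right in each bound and making sure the Fubini interchange of $\E$ (over $g_{ij}$) with the $\sigma$-integral is justified; the Gaussian integrability is automatic since $H_N(\sigma)$ is bounded by $CN^{3/2}$ deterministically in $\sigma$ for any fixed disorder realization and has finite exponential moments in $(g_{ij})$ for fixed $\sigma$. Everything else is the bookkeeping observation that the self-overlap correction is designed precisely to cancel $\tfrac{\beta^2}{2}\var(H_N)$, so the upper bound is an exact cancellation, while the lower bound loses only the $O(1)$ term $\tfrac{\beta^2}{2}(1-|m_N|^2)\le\tfrac{\beta^2}{2}$.
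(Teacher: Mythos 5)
Your proof is correct, and it takes a genuinely different route from the paper. The paper defines $\phi(s)=F_{\beta\sqrt{s},N}(t,x)$, applies Gaussian integration by parts to the derivative, and obtains $\phi'(s)=-\tfrac{\beta^2}{2}\,\E\langle|R_{12}|^2\rangle_s$ (with $R_{12}=N^{-1}\sigma\sigma'^\intercal$ the replica overlap), which lies in $[-\tfrac{\beta^2}{2},0]$ since the overlap entries are nonnegative and sum to $1$; integrating from $s=0$ to $s=1$ gives both bounds at once. Your argument avoids the interpolation and the IBP computation entirely: for the upper bound you use Jensen/annealing ($\E\log\le\log\E$ plus Fubini) and the exact cancellation $\E e^{\beta H_N(\sigma)}=e^{\frac{\beta^2}{2}N|m_N|^2}$, which is precisely what the self-overlap correction is built to cancel; for the lower bound you use the Gibbs variational inequality $\log\int e^{A+B}\,\d\nu\geq\log\int e^B\,\d\nu+\langle A\rangle_B$ with $A=\beta H_N$, $B$ disorder-free, so $\E\langle A\rangle_B=0$, and then $|m_N|^2\leq 1$. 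Both are sound. The paper's interpolation is the more standard spin-glass device and gives the extra information that $F_{\beta,N}$ is nonincreasing in $\beta$ with an explicit derivative; your version is more elementary and makes the ``the correction is exactly $-\tfrac12\var$'' structure transparent. One small remark: in your Gibbs-variational step it is cleaner to keep the full $(t-\tfrac{\beta^2}{2})N|m_N|^2$ inside $B$ and only invoke $|m_N|^2\leq 1$ at the very end to compare $\int e^B$ with $e^{-\frac{\beta^2}{2}N}\int e^{tN|m_N|^2+Nx\cdot m_N}$, exactly as you do in the ``cleanest route'' paragraph; the earlier sketch that replaces $|m_N|^2$ by $1$ inside the Hamiltonian before averaging is equivalent but slightly less tidy.
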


\begin{proof}
For $s\in[0,1]$, we define $\phi(s) =  F_{\beta\sqrt{s},N}(t,x)$. Then, we can compute the derivative $\phi'(s)$ of $\phi(s)$. By using the Gaussian integration by parts and the boundedness of Potts spins, we can show $-\frac{\beta^2}{2}\leq \phi'(s)\leq 0$. Since $\phi(1) = F_{\beta,N}(t,x)$ and $\phi(0) = F_N^\cw(t,x)$, we can get the desired result.
\end{proof}

\begin{proof}[Proof of $t_\cc<\infty$ in Proposition~\ref{p.t_c}~\eqref{i.p.t_c_1}]
For any $t>0$, set $y(t) = (2t,0,\cdots,0)\in\R^\D$. Using~\eqref{e.psi=_2}, we can compute
\begin{align}\label{e.psi(y(t))-|y(t)|^2/rt}
    \psi(y(t))-\frac{|y(t)|^2}{4t}= \log (e^{2t}+\D-1)-\log \D - t \geq t- \log \D
\end{align}
Then, we have
\begin{align*}
    f_\beta(t,0)=\lim_N F_{\beta,N}(t,0) \stackrel{\text{L.\ref{l.F-F^CW_2}}}{\geq} \lim_N F^\cw_N(t,0) - \frac{\beta^2}{2} \stackrel{\eqref{e.limF^CW_N_2},\eqref{e.psi(y(t))-|y(t)|^2/rt}}{\geq} t- \log \D - \frac{\beta^2}{2}
\end{align*}
which implies that, for all $t$ sufficiently large,
\begin{align*}
    f_\beta(t,0) > \sP_\beta(0) + \frac{t}{\D}
\end{align*}
and thus $t_\cc<\infty$ by its definition in~\eqref{e.t_c=}.
\end{proof}

\subsection{Color symmetry breaking}\label{s.2.4.color_sym_break}

We show that $t_\cc$ in~\eqref{e.t_c=} marks the color symmetry breaking.

\begin{proposition}[Color symmetry]\label{p.color_symmetry}
The following holds:
\begin{enumerate}
    \item If $t\in [0,t_\cc]$, then the color symmetry is preserved and we have
    \begin{align}\label{e.color_sym_f=}
        f_\beta(t,0) = \inf_{\zeta\in\mathcal{Z}}\sP_\beta(\Psi\circ\zeta, 0) + \frac{t}{\D}.
    \end{align}
    \item If $t>t_\cc$, then the color symmetry is broken and we have
    \begin{align*}
        f_\beta(t,0) >\inf_{\zeta\in\mathcal{Z}}\sP_\beta(\Psi\circ\zeta, 0) + \frac{t}{\D}.
    \end{align*}
\end{enumerate}
\end{proposition}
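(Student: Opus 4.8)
The plan is to split according to the two regimes and in each case pass between the Hopf--Lax representation~\eqref{e.f(t,0)=hopf_lax_path} and the color-symmetric formula via the machinery of Proposition~\ref{p.t_c}.

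For part~(1), the main point is that for $t\le t_\cc$ we know from Lemma~\ref{l.color_sym_max_in_R1} (combined with the definition of $t_\cc$ in~\eqref{e.t_c=}) that $f_\beta(t,0)=\sP_\beta(0)+\frac t{\D}$, and from Proposition~\ref{p.t_c}~\eqref{i.p.t_c_2} that (for $t<t_\cc$) the unique maximizer of~\eqref{e.hopf-lax} is $y=\frac{2t}{\D}\one\in\R\one$, with $\nabla\sP_\beta(y)=\D^{-1}\one$ by~\eqref{e.nabla_sP=D^-1one}. So the correct candidate maximizer in Definition~\ref{d.color_symmetry} is $y=\frac{2t}{\D}\one$. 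It remains to show that the inner infimum $\inf_{\pi\in\Pi(\D^{-1}\one)}\sP_\beta(\pi,y)$ is attained along symmetric paths $\Psi\circ\zeta$, i.e.\ that $\sP_\beta(\D^{-1}\identity\text{-endpoint Parisi at }y)=\inf_{\zeta\in\mathcal Z}\sP_\beta(\Psi\circ\zeta,y)$. This is exactly the color-symmetry statement for the self-overlap-corrected model, and I would cite it from~\cite{chen2023onparisi} (equivalently~\cite{issa2024existence}): the Parisi functional $\sP_\beta(\pi,x)$ with $x\in\R\one$ is invariant under the $\Sym$-action on paths, so symmetrizing any path does not increase the functional; since the endpoint $\D^{-1}\identity$ is fixed by $\Sym$ and the only $\Sym$-invariant increasing paths with that endpoint and trace-sum one are of the form $\Psi\circ\zeta$, the infimum is achieved in this class. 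Then use~\eqref{e.sP(x+r1)=sP(x)+r}-type shift (or directly $\sP_\beta(\pi,y)=\sP_\beta(\pi,0)+\frac{y_1}{\D}$-style computation since $y=\frac{2t}{\D}\one$) together with $\frac{|y|^2}{4t}=\frac t{\D}$ to rewrite the whole expression as $\inf_{\zeta\in\mathcal Z}\sP_\beta(\Psi\circ\zeta,0)+\frac t{\D}$, which is~\eqref{e.color_sym_f=}. The boundary case $t=t_\cc$ follows by continuity of $f_\beta(\cdot,0)$ and of $t\mapsto\inf_\zeta\sP_\beta(\Psi\circ\zeta,0)+\frac t{\D}$, taking $t\uparrow t_\cc$ in~\eqref{e.color_sym_f=}, together with the fact (Proposition~\ref{p.t_c}~\eqref{i.p.t_c_3}, applied in the limit) that $f_\beta(t_\cc,0)=\sP_\beta(0)+\frac{t_\cc}{\D}$.

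For part~(2), note first that the right-hand side $\inf_{\zeta\in\mathcal Z}\sP_\beta(\Psi\circ\zeta,0)+\frac t{\D}$ equals $\sP_\beta(0)+\frac t{\D}$ by the $t=0$ case of part~(1) (the symmetric infimum at $y=0$ is just $\sP_\beta(0)$, since $\Pi(\D^{-1}\one)$ symmetric-path optimization recovers $\sP_\beta(0)$, or more simply because it does not depend on $t$ once the $\frac t\D$ is pulled out and it coincides with $f_\beta(0,0)=\sP_\beta(0)$). By Proposition~\ref{p.t_c}~\eqref{i.p.t_c_3}, for $t>t_\cc$ we have $f_\beta(t,0)>\sP_\beta(0)+\frac t{\D}$, which is precisely the claimed strict inequality. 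It remains to check that this strictness forces the color symmetry to be \emph{broken} in the sense of Definition~\ref{d.color_symmetry}: if it were preserved, there would be a maximizer $y$ with $\nabla\sP_\beta(y)=\D^{-1}\one$ and $f_\beta(t,0)=\inf_\zeta\sP_\beta(\Psi\circ\zeta,y)-\frac{|y|^2}{4t}$; but the constraint $\nabla\sP_\beta(y)=\D^{-1}\one$ together with $\nabla\sP_\beta(y)=\frac{y}{2t}$ (first-order condition for the maximizer, as in~\eqref{e.nabla_sP(y)=y/2t}) forces $y=\frac{2t}{\D}\one$, and then Lemma~\ref{l.color_sym_max_in_R1} gives $f_\beta(t,0)=\sP_\beta(0)+\frac t\D$, contradicting the strict inequality. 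Hence color symmetry is broken.

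The main obstacle is the justification, in part~(1), that the inner Parisi infimum over $\Pi(\D^{-1}\one)$ is realized along the one-parameter symmetric family $\Psi\circ\zeta$; this is the genuinely model-specific input and is exactly where I would invoke the color-symmetry theorem of~\cite{chen2023onparisi} (with the reformulation via~\eqref{e.diag(nabla_sP)} from Remark~\ref{r.compare_note}). Everything else is bookkeeping: tracking the additive shift $\sP_\beta(\pi,x+r\one)=\sP_\beta(\pi,x)+r$ (the path analogue of~\eqref{e.sP(x+r1)=sP(x)+r}), evaluating $\frac{|y|^2}{4t}$ at $y=\frac{2t}{\D}\one$, and the two continuity/limiting arguments at $t=t_\cc$.
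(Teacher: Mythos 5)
Your proposal is correct and follows essentially the same route as the paper: identify $y=\tfrac{2t}{\D}\one$ as the symmetric maximizer via Lemma~\ref{l.color_sym_max_in_R1}, invoke \cite[Theorem~1.1]{chen2023onparisi} for $\sP_\beta(0)=\inf_\zeta\sP_\beta(\Psi\circ\zeta,0)$, shift and rescale to land on~\eqref{e.color_sym_f=}, handle $t=t_\cc$ by continuity, and derive the supercritical contradiction from the first-order condition $\nabla\sP_\beta(y)=\tfrac{y}{2t}$ together with Proposition~\ref{p.t_c}~\eqref{i.p.t_c_3}. You correctly flag that the rewriting needs the path-level shift identity $\sP_\beta(\pi,x+r\one)=\sP_\beta(\pi,x)+r$ rather than just~\eqref{e.sP(x+r1)=sP(x)+r}, a point the paper passes over without comment.
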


We remark that \cite[Theorem~1.2]{chen2023onparisi} gives a rewriting of the infimum in~\eqref{e.color_sym_f=} which admits a unique minimizer. This result is based on an extension of the arguments in~\cite{aufche} (also see~\cite{chen2023parisi}).

\begin{proof}
To show the first part, let $t\in[0,t_\cc]$. 
By the definition of $t_\cc$ in~\eqref{e.t_c=} and continuity (needed for $t=t_\cc$), we have $f_\beta(t,0)=\sP_\beta(0)+\frac{t}{\D}$. Lemma~\ref{l.color_sym_max_in_R1} implies that $y=\frac{2t}{\D}\one$ is a maximizer of the Hopf--Lax formula in~\eqref{e.hopf-lax} and thus also of the supremum in~\eqref{e.f(t,0)=hopf_lax_path}. By~\eqref{e.nabla_sP=D^-1one}, we have $\nabla\sP_\beta(y) =\D^{-1}\one$. By~\cite[Theorem~1.1]{chen2023onparisi}, we have
\begin{align}\label{e.sP(0)=inf_zeta}
    \sP_\beta(0) = \inf_{\zeta\in\mathcal{Z}}\sP_\beta(\Psi\circ\zeta, 0)
\end{align}
and thus
\begin{align*}
    f_\beta(t,0) = \inf_{\zeta\in\mathcal{Z}}\sP_\beta(\Psi\circ\zeta, 0) + \frac{t}{\D} \stackrel{\eqref{e.sP(x+r1)=sP(x)+r}}{=}\inf_{\zeta \in \mathcal Z}\sP_\beta(\Psi\circ \zeta,y) - \frac{|y|^2}{4t}
\end{align*}
which proves the first part.

Now, let $t>t_\cc$. Suppose that the color symmetry is preserved. Then, there is a maximizer $y$ of the Hopf--Lax formula~\eqref{e.hopf-lax} such that $\nabla\sP_\beta(y) =\D^{-1}\one$. The maximality implies that $\nabla\sP_\beta(y) = \frac{y}{2t}$ and thus $y= \frac{2t}{\D}\one$, which contradicts Proposition~\ref{p.t_c}~\eqref{i.p.t_c_3}. Hence, the color symmetry is broken. Proposition~\ref{p.t_c}~\eqref{i.p.t_c_3} also gives $f_\beta(t,0)>\sP_\beta(0)+\frac{t}{\D}$ which together with~\eqref{e.sP(0)=inf_zeta} gives the desired inequality.
\end{proof}

\subsection{Ferromagnetism}\label{s.2.5}

We describe the phase transition at $t_\cc$ in terms of the change of magnetization. 

\begin{proposition}[Magnetization]\label{p.mag}
The following holds.
\begin{enumerate}
    \item \label{i.p.mag_1} (``Zero'' magnetization in the sub-critical regime) Let $t\in[0,t_\cc)$ and let $(x_N)_{N\in\N}$ be any sequence converging to $0$. Then,
    \begin{align*}
        \lim_{N\to\infty} \E \la \Ll|m_N-\D^{-1}\one\Rr|\ra_{t,x_N} =0.
    \end{align*}

    \item \label{i.p.mag_2} (Spontaneous magnetization in the super-critical regime)
    Let $t>t_\cc$. Then, there is a strictly increasing sequence $(N_n)_{n\in\N}$ of positive integers and $(x_n)_{n\in\N}$ converging to $0$ such that $\lim_{n\to\infty} \E \la m_{N_n}\ra_{t,x_n}$ exists but is not $\D^{-1}\one$.
\end{enumerate}
\end{proposition}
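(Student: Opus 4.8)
The plan is to connect the magnetization to the $x$-derivative of the free energy and then exploit the differentiability/non-differentiability of $f_\beta(t,\cdot)$ at $0$ established in Proposition~\ref{p.t_c}. The starting point is the standard identity obtained by differentiating~\eqref{e.F_N=} in $x$: for each $N$, one has $\nabla_x F_{\beta,N}(t,x) = \E\la m_N\ra_{t,x}$, since $\partial_{x_d}H_{\beta,N}(t,x,\sigma) = N (m_N)_d$. Because each $F_{\beta,N}(t,\cdot)$ is convex (uniformly in $N$) and converges pointwise to $f_\beta(t,\cdot)$ by Proposition~\ref{p.cvg_F_N}, convexity promotes this to convergence of (sub)gradients: at any point $x$ where $f_\beta(t,\cdot)$ is differentiable, $\nabla_x F_{\beta,N}(t,x_N) \to \nabla f_\beta(t,x)$ for every sequence $x_N \to x$, and more generally every limit point of $\E\la m_N\ra_{t,x_N}$ lies in the subdifferential $\partial_x f_\beta(t,0)$. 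I would quote the relevant convex-analysis fact (Attouch's theorem / the standard gradient-convergence lemma for convex functions, as used throughout the Hamilton--Jacobi-for-spin-glass literature, e.g.\ \cite{HJbook}).

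For Part~\eqref{i.p.mag_1}: let $t\in[0,t_\cc)$. By Proposition~\ref{p.t_c}~\eqref{i.p.t_c_2}, $f_\beta(t,\cdot)$ is differentiable at $0$ with $\nabla f_\beta(t,0) = \D^{-1}\one$. Hence for any $x_N\to 0$, $\E\la m_N\ra_{t,x_N} \to \D^{-1}\one$. To upgrade the convergence of the mean to the $L^1$-type statement $\E\la|m_N-\D^{-1}\one|\ra_{t,x_N}\to 0$, I would use concentration of $m_N$ under the Gibbs measure. The relation~\eqref{e.self-overlap=m_N} identifies $m_N$ with the diagonal of the self-overlap, so Gaussian concentration (Gibbs-measure concentration of the overlap array) gives that $m_N$ concentrates around its mean $\E\la m_N\ra_{t,x_N}$; combined with the convergence of the mean to $\D^{-1}\one$, the triangle inequality closes the argument. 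Alternatively — and perhaps more cleanly — one perturbs the external field: for a fixed direction $h$, $\frac{1}{N}\E\log\la e^{Nh\cdot m_N}\ra_{t,x_N}\to f_\beta(t,h)-f_\beta(t,0)$, and differentiating in $h$ at $0$ (justified again by convexity, using differentiability of $f_\beta(t,\cdot)$ near $0$) controls the fluctuations of $m_N$ in direction $h$; ranging over $h$ gives the full $L^1$ statement. Since $m_N$ lies in a compact set, convergence in probability plus boundedness yields the claimed $L^1$ convergence.

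For Part~\eqref{i.p.mag_2}: let $t>t_\cc$. By Proposition~\ref{p.t_c}~\eqref{i.p.t_c_3}, $f_\beta(t,\cdot)$ is \emph{not} differentiable at $0$, so its subdifferential $\partial_x f_\beta(t,0)$ is not a singleton; by the color symmetry~\eqref{e.symmetry_F_sP_f}, $\partial_x f_\beta(t,0)$ is invariant under the $\Sym$-action, and since it is not reduced to the unique symmetric point $\D^{-1}\one$, it must contain a point $v\neq \D^{-1}\one$ (indeed a whole $\Sym$-orbit of such points). A point in the subdifferential is a limit of gradients at nearby differentiability points: there exist $x_n\to 0$ with $f_\beta(t,\cdot)$ differentiable at each $x_n$ and $\nabla f_\beta(t,x_n)\to v$. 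For each such $x_n$, convexity gives $\lim_{N}\E\la m_N\ra_{t,x_n} = \nabla f_\beta(t,x_n)$; a diagonal extraction then produces a strictly increasing sequence $(N_n)$ with $\E\la m_{N_n}\ra_{t,x_n}\to v\neq\D^{-1}\one$, as required. I would make the choice of $x_n$ explicit — e.g.\ moving slightly in a non-symmetric direction where one color is favored, consistent with the heuristic that ferromagnetism picks out one color.

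The main obstacle is the passage from convergence-in-mean to the genuine $L^1$ statement in Part~\eqref{i.p.mag_1}: the bare convexity argument only delivers $\E\la m_N\ra_{t,x_N}\to\D^{-1}\one$, and one needs a concentration input to kill the Gibbs-measure fluctuations of $m_N$ itself. This is where I would need to invoke either a standard Gaussian-concentration bound for the free energy together with a convexity (second-derivative) argument à la Panchenko, or the perturbation-in-$h$ device above — the technical care being to ensure the relevant one-sided derivatives of $f_\beta(t,\cdot)$ at $0$ coincide, which is exactly guaranteed by the differentiability from Proposition~\ref{p.t_c}~\eqref{i.p.t_c_2}. Part~\eqref{i.p.mag_2}, by contrast, is essentially soft once one has the non-differentiability and the $\Sym$-symmetry of the subdifferential in hand.
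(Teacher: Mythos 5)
Your proposal matches the paper's argument essentially step for step: the identity $\nabla F_{\beta,N}(t,x)=\E\la m_N\ra_{t,x}$, convexity to upgrade pointwise free-energy convergence to gradient convergence, Proposition~\ref{p.t_c} for the (non-)differentiability of $f_\beta(t,\cdot)$ at $0$, free-energy concentration plus convexity (as in \cite[Proposition~2.4]{chen2023on}) to pass from convergence of $\E\la m_N\ra$ to the $L^1$ statement in Part~\eqref{i.p.mag_1}, and Rockafellar's characterization of the subdifferential of a convex function together with a diagonal extraction for Part~\eqref{i.p.mag_2}. The only cosmetic difference is that you invoke the $\Sym$-invariance of $\partial_x f_\beta(t,0)$ to exhibit a subgradient $\neq\D^{-1}\one$, whereas the paper argues directly that if all gradient limits were $\D^{-1}\one$ then $f_\beta(t,\cdot)$ would be differentiable at $0$ — both are immediate consequences of the same convex-analysis fact.
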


\rv{In the first part of Proposition~\ref{p.color_symmetry}, the identity~\eqref{e.color_sym_f=} holds at $t_\cc$, as a simple consequence of continuity. However, due to the lack of understanding of the differentiability of $f_{\beta}(t_\cc,\cdot)$ at $0$, results such as Proposition~\ref{p.mag} fail to provide information at the critical point $t = t_\cc$.}

\begin{proof}[Proof of Proposition~\ref{p.mag}~\eqref{i.p.mag_1}]
By computing the first-order derivative of $F_{\beta,N}(t,\cdot)$ using the expression in~\eqref{e.F_N=}, we have
\begin{align}\label{e.nabla_F_N=E<m>}
    \nabla F_{\beta,N}(t,x) = \E \la m_N\ra_{t,x},\quad\forall x\in\R^\D.
\end{align}
In particular, this implies that $F_{\beta,N}(t,\cdot)$ is Lipschitz uniformly in $N$. We can use this to upgrade the pointwise convergence in Proposition~\ref{p.cvg_F_N} to that $F_{\beta,N}(t,\cdot)$ converges to $f_\beta(t,\cdot)$ uniformly on every compact set.
By computing the second-order derivatives of $F_{\beta,N}(t,\cdot)$ (similar to~\cite[(2.4)]{chen2023on}), we can verify that $F_{\beta,N}(t,\cdot)$ is convex. Then, for any $y\in\R^\D$ and $r>0$, the convexity gives
\begin{align*}
    \frac{F_{\beta,N}(t,x_N) - F_{\beta,N}(t,x_N-ry)}{r} \leq y\cdot \nabla F_{\beta,N}(t,x_N)\leq \frac{F_{\beta,N}(t,x_N+ry) - F_{\beta,N}(t,x_N)}{r}.
\end{align*}
First sending $N\to\infty$ and then $r\to0$, we can use the local uniform convergence of $F_{\beta,N}$ and the differentiability of $f_\beta(t,\cdot)$ at $0$ to get $\lim_{N\to\infty} y\cdot \nabla F_{\beta,N}(t,x_N) = y\cdot \nabla f_\beta(t,0)$. Using this along with~\eqref{e.F_N=} and $\nabla f_\beta(t,0) =\D^{-1}\one$ (Proposition~\ref{p.t_c}~\eqref{i.p.t_c_2}), we get
\begin{align*}
    \lim_{N\to\infty} \E \la m_N\ra_{t,x_N}= \D^{-1}\one.
\end{align*}
The proof will be complete if we can show
\begin{align*}
    \lim_{N\to\infty} \E \la \Ll|m_N-\E \la m_N\ra_{t,x_N}\Rr|\ra_{t,x_N} =0.
\end{align*}
This is a standard consequence of the concentration of free energy, the convexity of $F_{\beta,N}(t,\cdot)$ (and its un-averaged version), and the differentiability of $f_\beta(t,\cdot)$ at $0$ (due to $t<t_\cc$). For instance, one can use the same argument for~\cite[Proposition~2.4]{chen2023on} verbatim by substituting $m_N$, $x_N$, $F_{\beta,N}(t,\cdot)$, $f_\beta(t,\cdot)$ for $N^{-1}\sigma\sigma^\intercal$, $x$, $F_N$, $\sP$ therein. The detail is tedious and thus omitted here.
\end{proof}

\begin{proof}[Proof of Proposition~\ref{p.mag}~\eqref{i.p.mag_2}]
Recall that $f_\beta(t,\cdot)$ is convex (see Proposition~\ref{p.cvg_F_N}). If for every sequence $(x_n)_{n\in\N}$ of differentiable points of $f_\beta(t,\cdot)$ converging to $0$ we have $\lim_{n\to\infty} \nabla f_\beta(t,x_n) = \D^{-1}\one$, then we can use the characterization of differentials of convex function as stated in~\cite[Theorem~25.6]{rockafellar1970convex} to deduce that $f_\beta(t,\cdot)$ is differentiable at $0$. Since we know that $f_\beta(t,\cdot)$ is not differentiable at $0$ due to Proposition~\ref{p.t_c}~\eqref{i.p.t_c_3}, there must be $(x_n)_{n\in\N}$ such that $(\nabla f_\beta(t,x_n))_{n\in\N}$ does not converge to $\D^{-1}\one$. 
Since $f_\beta(t,\cdot)$ is Lipschitz (see Proposition~\ref{p.cvg_F_N}), the derivatives of $f_\beta(t,\cdot)$ are bounded, which allows us to pass to a subsequence (still denoted as $(x_n)_{n\in\N}$ for convenience) along which $\nabla f_\beta(t,x_n)$ converges to some $a\in\R^\D$ other than $\D^{-1}\one$. 
As argued in the previous proof, we have $\lim_{N\to\infty} \nabla F_{\beta,N}(t,x_n) = \nabla f_\beta(t,x_n)$ for each $n$.
Then, we can find a strictly increasing sequence $(N_n)_{n\in\N}$ such that $\nabla F_{\beta,N_n}(t,x_n)$ approximates $\nabla f_\beta(t,x_n)$ and thus $\lim_{n\to\infty} \nabla F_{\beta,N_n}(t,x_n) = a$. This along with~\eqref{e.nabla_F_N=E<m>} gives the result.
\end{proof}

\noindent
\textbf{Acknowledgments.}
The author thanks Victor Issa and Jean-Christophe Mourrat for helpful discussions. 
The author thanks the anonymous referee for valuable suggestions that improved the manuscript.

\noindent
\textbf{Funding.}
The author is funded by the Simons Foundation.

\noindent
\textbf{Data availability.}
No datasets were generated during this work.

\noindent
\textbf{Conflict of interests.}
The author has no conflicts of interest to declare.

\noindent
\textbf{Competing interests.}
The author has no competing interests to declare.

\small
\bibliographystyle{abbrv}
\newcommand{\noop}[1]{} \def\cprime{$'$}

\end{document}